\documentclass[a4paper,USenglish,cleveref, autoref, thm-restate]{lipics-v2021}

\pdfoutput=1 %
\hideLIPIcs  %

\graphicspath{{./graphics/}}%

\title{
On the Parameterized Complexity of Capacitated Stack and Queue Layouts with Colored Edges%
} %

\titlerunning{Capacitated Linear Layouts} %

\author{Thomas Depian}{TU Wien, Austria}{tdepian@ac.tuwien.ac.at}{https://orcid.org/0009-0003-7498-6271}{Supported by the Vienna Science and Technology Fund (WWTF) [10.47379/ICT22029].}

\author{Giuseppe Liotta}{University of Perugia, Italy}{giuseppe.liotta@unipg.it}{https://orcid.org/0000-0002-2886-9694}{}

\author{Fabrizio Montecchiani}{University of Perugia, Italy}{fabrizio.montecchiani@unipg.it}{https://orcid.org/0000-0002-0543-8912}{}

\author{Martin N{\"o}llenburg}{TU Wien, Austria}{noellenburg@ac.tuwien.ac.at}{https://orcid.org/0000-0003-0454-3937}{Supported by the Vienna Science and Technology Fund (WWTF) [10.47379/ICT22029].}

\authorrunning{T. Depian, G. Liotta, F. Montecchiani, and M. N{\"o}llenburg} %

\Copyright{Thomas Depian, Giuseppe Liotta, Fabrizio Montecchiani, and Martin N{\"o}llenburg} %

\ccsdesc[500]{Theory of computation~Parameterized complexity and exact algorithms}
\ccsdesc[500]{Mathematics of computing~Graph algorithms}

\keywords{linear layouts, stack layouts, queue layouts, parameterized complexity, fair allocation} %

\category{} %

\relatedversion{} %

\acknowledgements{The authors acknowledge the use of AI tools (in particular Claude Opus 5.5) for proofreading the pre-final version of this submission and identifying some additional related literature. No such tools were used for obtaining the results and the writeup of the pre-final version. The authors assume full responsibility for the submission.}%

\nolinenumbers %

\EventEditors{John Q. Open and Joan R. Access}
\EventNoEds{2}
\EventLongTitle{42nd Conference on Very Important Topics (CVIT 2016)}
\EventShortTitle{CVIT 2016}
\EventAcronym{CVIT}
\EventYear{2016}
\EventDate{December 24--27, 2016}
\EventLocation{Little Whinging, United Kingdom}
\EventLogo{}
\SeriesVolume{42}
\ArticleNo{23}

\usepackage{mdframed}
\usepackage{xspace}
\usepackage{todonotes}
\usepackage{complexity}
\usepackage{booktabs}
\usepackage{makecell}
\usepackage{multirow}
\usepackage{mathtools}
\usepackage{pifont}
\usepackage{cite}

\usetikzlibrary{arrows.meta,patterns,ipe}

\newboolean{long}
\newcommand{\AppendixSymbol}{\ding{72}}
\setboolean{long}{true} %
\ifthenelse{\boolean{long}}{
	\NewDocumentEnvironment{prooflater}{m}{\begin{proof}}{\end{proof}\ignorespacesafterend}
	\NewDocumentEnvironment{proofsketch}{o +b}{}{\ignorespacesafterend}
	\newcommand{\restateref}[1]{}
	\NewDocumentEnvironment{statelater}{m}{}{}

	\NewDocumentCommand{\onlyShort}{+m}{}
	\NewDocumentCommand{\onlyLong}{+m}{#1}
	\NewDocumentCommand{\shortLong}{+m +m}{#2}
}{
	\NewDocumentEnvironment{prooflater}{m +b}{%
		\expandafter\global\expandafter\def\csname#1\endcsname{\begin{proof}#2\end{proof}}%
	}{\ignorespacesafterend}
	\NewDocumentEnvironment{proofsketch}{O{Proof sketch.}}{\begin{proof}[#1]}{\end{proof}\ignorespacesafterend}
	\usepackage{apptools}
	\newcommand{\restateref}[1]{[\IfAppendix{\hyperref[#1]{\AppendixSymbol{}}}{\hyperref[#1*]{\AppendixSymbol{}}}]}

	\NewDocumentEnvironment{statelater}{m +b}{%
		\expandafter\global\expandafter\def\csname#1\endcsname{#2}%
	}{\ignorespacesafterend}

	\NewDocumentCommand{\onlyShort}{+m}{#1}
	\NewDocumentCommand{\onlyLong}{+m}{}
	\NewDocumentCommand{\shortLong}{+m +m}{#1}
}
\let\oldrestatable\restatable
\def\restatable{\expandafter\oldrestatable}

\newcommand{\probname}[1]{{\normalfont\textsc{#1}}}
\newcommand{\capacitatedStackTwoPages}{\probname{FoCap2PSl}\xspace}
\newcommand{\capacitatedStackGen}{\probname{FoCapSl}\xspace}
\newcommand{\capacitatedStack}{\capacitatedStackTwoPages}
\newcommand{\capacitatedStackLong}{\probname{Fixed Order Capacitated Stack Layout}\xspace}
\newcommand{\capacitatedStackLongTwoPages}{\probname{Fixed Order Capacitated 2-Page Stack Layout}\xspace}
\newcommand{\capacitatedQueue}{\probname{FoCapQl}\xspace}
\newcommand{\capacitatedQueueLong}{\probname{Fixed Order Capacitated Queue Layout}\xspace}

\newcommand{\capacitatedStackQueueLong}{\probname{Fixed Order Capacitated Stack (Queue) Layout}\xspace}

\newcommand{\probdef}[3]{%
    \begin{mdframed}
		\probname{#1}
		\begin{description}
            \item[Given] #2
            \item[Question] #3
        \end{description}
	\end{mdframed}
}%
\newcommand{\instance}{\ensuremath{\mathcal{I}}\xspace}
\newcommand{\instanceLongStack}{\ensuremath{(G, \delta, \prec, \ell = 2, b_1, b_2)}\xspace}
\NewDocumentCommand{\instanceLongQueue}{O{\ensuremath{\ell}}}{\ensuremath{(G, \delta, \prec, #1, (b_p)_{p \in[#1]})}\xspace}
\newcommand{\instanceLongQueueTwoPages}{\ensuremath{(G, \delta, \prec, \ell = 2, b_1, b_2)}\xspace}

\NewDocumentCommand{\linL}{o o}{\ensuremath{\langle\prec\IfNoValueF{#1}{_{#1}},\sigma\IfNoValueF{#1}{_{#1}}\rangle\IfNoValueF{#2}{_{#2}}}\xspace}
\NewDocumentCommand{\stackL}{o}{\ensuremath{\linL[#1][\mathcal{S}]}\xspace}
\NewDocumentCommand{\queueL}{o}{\ensuremath{\linL[#1][\mathcal{Q}]}\xspace}
\NewDocumentCommand{\stackConflict}{o}{\ensuremath{\mathcal{S}\IfNoValueF{#1}{(#1)}}\xspace}
\NewDocumentCommand{\queueConflict}{o}{\ensuremath{\mathcal{Q}\IfNoValueF{#1}{(#1)}}\xspace}

\NewDocumentCommand{\conComp}{o}{\ensuremath{\lambda\IfNoValueF{#1}{(#1)}}\xspace}
\NewDocumentCommand{\conCompSize}{o}{\ensuremath{\gamma\IfNoValueF{#1}{(#1)}}\xspace}

\newcommand{\col}[1]{\ensuremath{\mathsf{col}(#1)}}

\newcommand{\Size}[1]{\ensuremath{\left\vert #1 \right\vert}}
\newcommand{\BigO}[1]{\ensuremath{\mathcal{O}(#1)}}

\begin{document}

\maketitle

\begin{abstract}
Stack and queue layouts are two extensively studied and complementary linear representations of graphs.
A \emph{stack (queue) layout} $\langle \prec, \sigma \rangle$ of a graph $G$ consists of a total ordering $\prec$ of its vertices and of a partitioning $\sigma$ of its edges into pages, such that no two independent edges on the same page cross (nest).
We propose a generalized model with colored edges (one color per edge) and capacitated pages. Namely, the input graph is equipped with colored edges and a prescribed budget for each page-color pair $(p,c)$ -- this budget determines the maximum number of edges of color $c$ that page $p$ can accept. This model is motivated by fair allocation scenarios where budgets can be used to control the share of monochromatic edges on each page. 
Unfortunately, it is known that deciding whether a stack (queue) layout exists is \NP-hard already with two (one) pages. Thus, we focus on the fixed-order setting -- where the linear order $\prec$ is part of the input -- for which an ordinary 2-page stack or $\ell$-page queue layout can be computed in polynomial time.
Despite the restrictive setting, we show that both capacitated problems are \NP-complete already for two pages.
Furthermore, we study the complexity of the problems when parameterized by the number of pages, colors, and/or the structure of the edge conflict graph with respect to the fixed linear order $\prec$.
Our analysis paints a detailed complexity landscape and uncovers a surprisingly intricate interplay between the considered parameters. 

\end{abstract}

\section{Introduction}
\label{sec:intro}

A \emph{linear layout} $\langle \prec, \sigma \rangle$ of a graph $G$ consists of a total order $\prec$ of its vertices and a partition~$\sigma$ of its edges into pages, such that edges on the same page avoid specified patterns with respect to $\prec$. Over the last four decades, linear layouts have been a central research topic in graph drawing and topological graph theory. Two prominent examples are \emph{stack layouts} and \emph{queue layouts}, in which no two independent edges on the same page are allowed to cross or nest, respectively. These names reflect the data structures used to store the edges of each page during a left-to-right sweep of $\prec$: upon reaching a vertex $v$, edges whose right endpoint is $v$ are removed from the corresponding stack or queue, after which edges whose left endpoint is~$v$ are inserted. With suitable ordering of the operations at each vertex, these accesses follow a last-in-first-out or first-in-first-out discipline, respectively.

Motivated by applications in VLSI design~\cite{CLR87,DBLP:journals/tc/Rosenberg83}, permutation sorting~\cite{DBLP:conf/stoc/Pratt73,DBLP:journals/jacm/Tarjan72}, compact graph encodings~\cite{DBLP:conf/focs/Jacobson89,DBLP:journals/siamcomp/MunroR01}, RNA folding~\cite{HASLINGER1999437}, and traffic-light control~\cite{kainen1990book}, stack and queue layouts have been extensively studied from both algorithmic and structural perspectives. Research has addressed their computational complexity~\cite{DBLP:journals/jgaa/BhoreGMN20,DBLP:journals/jgaa/BhoreGMN22,DBLP:conf/icalp/GanianMOPR24,HR.LOG.1992,DFGS.LLR.2025}, their connections to planarity~\cite{DBLP:journals/algorithmica/AlamBGKP20,DBLP:journals/siamcomp/BekosFGMMRU19,DBLP:journals/jacm/DujmovicJMMUW20}, and their relationships with other graph properties and parameters~\cite{DBLP:journals/siamdm/VandenbusscheWY09,DBLP:journals/combinatorics/Wiechert17}.

Several variants extend this framework by imposing additional constraints or modifying the permitted edge patterns. Merker and Ueckerdt~\cite{DBLP:conf/gd/MerkerU19,DBLP:conf/gd/MerkerU20} introduced the local stack (queue) number of a graph $G$, defined as the minimum $\ell$ such that $G$ admits a stack (queue) layout in which the edges incident to each vertex belong to at most $\ell$ pages. Alam et al.~\cite{DBLP:journals/tcs/AlamBDGKP21} studied dispersable layouts, which require the edges on each page to form a matching. Other works have explored layouts that mix different data structures or use alternative ones to organize the edges on each page (see, e.g.,~\cite{DBLP:journals/tcs/AngeliniBKM22,DBLP:conf/cccg/BekosKPR23,DBLP:conf/wads/GiacomoDFU025,DBLP:conf/stacs/HaunMP25,DBLP:conf/stacs/Katheder0PU25}). More recently, F\"orster et al.~\cite{DBLP:journals/corr/abs-2607-24375} introduced the weighted book thickness of a graph, defined as the minimum average page index of its edges over all stack layouts. Bekos et al.~\cite{DBLP:journals/corr/abs-2607-19968} studied a relaxation that permits a bounded number of otherwise forbidden patterns within each page.

\begin{figure}[b]
	\centering
	\includegraphics[page=1]{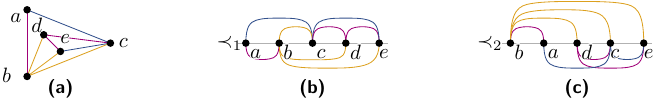}
	\caption{\textbf{\textsf{(a)}} An edge-colored graph $G$ and a \textbf{\textsf{(b)}}  2-page stack and \textbf{\textsf{(c)}}  2-page queue layout of $G$. Observe that there is no 2-page stack layout of $G$ with spine order $\prec_1$ where the two blue edges are on different pages.}
	\label{fig:intro}
\end{figure}

\subparagraph*{Capacitated Linear Layouts.} In this paper, we introduce \emph{capacitated linear layouts}, a generalization of stack and queue layouts in which each edge of the input graph $G$ is assigned one of $\chi$ colors, and each page has a separate budget for each color. The goal is to find a stack or queue layout using at most $\ell$ pages, subject to the constraint that, on each page, the number of edges of each color does not exceed the corresponding budget. The classical notions of stack and queue layouts are recovered by setting $\chi=1$ and assigning every page a budget of $|E(G)|$; see \Cref{fig:intro} for an illustrating example. %

\medskip\noindent\emph{Motivating Examples.} 
In \emph{partitioned book embeddings}, the assignment of edges to pages is part of the input~\cite{DBLP:conf/compgeom/LozzoFR26,ADBDB.IP2.2013}.
These can be seen as a special case of capacitated linear layouts, where edges assigned to page $p$ receive color $p$, and the budget of page $p$ is $\Size{E(G)}$ for color $p$ and $0$ for every other color.
When colors encode different types of elements---for example, in signed graphs with positive and negative relations among social media users~\cite{DBLP:conf/chi/LeskovecHK10}---capacitated linear layouts can model fair allocation problems, with suitable budgets specifying the share of each type assigned to each page. 
Moreover, edge colors can distinguish job classes or customers, and per-color page budgets constrain the number of jobs of each class assigned to each buffer, similar to the scheduling problem considered by Das and Wiese~\cite{DW.MMW.2017}.
In graph visualization, users may focus their attention on the first few pages, motivating layouts that distribute edges of each color evenly across pages to ensure that edges from every color appear among the first pages (see~\cite{DBLP:journals/isci/EadesHLMNPW25} for related fairness-aware formulations in graph drawing).
Finally, fairness constraints have also been studied in other graph problems, including graph clustering~\cite{DBLP:conf/aaai/FroeseKN22} and graph bisection~\cite{DBLP:conf/esa/0002L0S23}.

\subparagraph*{Contributions.}
We study the computational complexity of capacitated linear layouts from a parameterized perspective. Deciding whether a graph admits a 1-page queue layout or a 2-page stack layout is already \NP-complete~\cite{HR.LOG.1992,Wid.CHC.1982,BK.BTG.1979}. These hardness results carry over to our model, implying para\NP-hardness with respect to the number of pages, the number of colors, and their combination. We, therefore, focus on the fixed-order setting in which the vertex order~$\prec$ is given as part of the input and must be preserved. This restriction is consistent with many of the above-discussed motivating applications and indeed is well studied in the literature on stack and queue layouts~\cite{DBLP:conf/gd/BhoreGMN19,DBLP:journals/jgaa/BhoreGMN20,DBLP:journals/dmtcs/DujmovicW04,HR.LOG.1992,DBLP:conf/cocoa/LiuCH20,DBLP:conf/stacs/Unger88}. In particular, when $\prec$ is fixed, deciding if a graph admits an $\ell$-page queue layout can be done in polynomial time for any $\ell>0$~\cite{HR.LOG.1992}. For stack layouts the problem is tractable for $\ell \leq 2$, while it is \NP-hard for $\ell \geq 4$ pages~\cite{DBLP:conf/stacs/Unger88} and open for three pages~\cite{DBLP:journals/jgaa/BachmannRS24,DBLP:conf/sosa/SGLS26}. 
We formalize the following problems.

\probdef{
	\capacitatedStackQueueLong [{\normalfont short} \capacitatedStackGen (\capacitatedQueue)]}
{A graph $G = (V, E)$, an edge coloring function $\delta \colon E \rightarrow [\chi]$ on $\chi$ colors, a linear order $\prec$ on $V$, an integer $\ell \in \mathbb{N}$, and budgets $b_p \colon [\chi] \to \mathbb{N}_0$ for every page $p \in [\ell]$.}{
	Does there exist an $\ell$-page stack (queue) layout \linL of $G$ such that $\Size{\{e \in E \mid \delta(e) = c,\ \sigma(e) = p\}} \leq b_p(c)$ for every page $p \in [\ell]$ and color $c \in [\chi]$?
}

Since we mainly consider $\ell = 2$ pages for \capacitatedStackGen, we call the corresponding restricted variant \capacitatedStackLongTwoPages (\capacitatedStack for short).
We note that our problem parallels previous studies of bounded vertex colorings, where each color can be used at most $b$ times, for some value of $b$; see, e.g., the work by Jansen~\cite{Jan.mes.2003}.
In our setting, the graph to color would be the edge conflict graph with respect to the linear order $\prec$ and the number of allowed colors is the number of pages $\ell$.
However, recall that we have, on the one hand, an edge-colored graph $G$ and we can observe, on the other hand, that known hardness results for bounded vertex colorings for constant $b$ require an unbounded number of colors, i.e., pages.
In contrast, when adding edge colors we show that, despite~$\prec$ being fixed, both problems are \NP-complete already for two pages ($\ell=2$), which clearly separates the complexity of the problems with and without page capacities unless $\P= \NP$. Furthermore, we study the complexity of the problems when parameterized by the number of pages, colors, and/or the structure of the edge conflict graph.
We summarize our results in \Cref{fig:overview}. %

\begin{figure}
	\centering
	\input{graphics/cappar_new}
	\caption{Parameterized complexity of \textbf{\textsf{(a)}} \capacitatedStack (\Cref{sec:stack-layouts}) and \textbf{\textsf{(b)}} \capacitatedQueue (\Cref{sec:queue-layouts}) when parameterized by the number of pages $\ell$, the number of colors $\chi$, the number of connected components $\conComp$ in the conflict graph, and/or their maximum size $\conCompSize$.
		The result for \capacitatedQueue marked with $*$ applies only if $\ell = 2$.
		Moreover, note that $\ell = 2$ holds for every instance of \capacitatedStack and, therefore, we omit the parameter in (a).}
	\label{fig:overview}
\end{figure}

\onlyShort{
\smallskip
\noindent
\emph{Full details for statements marked by \AppendixSymbol\ can be found in the appendix.}
}

\section{Preliminaries}
\label{sec:preliminaries}
For an integer $\ell \geq 1$, we let $[\ell]$ denote the set $\{1, 2, \ldots, \ell\}$ and define $[\ell]_0 \coloneqq [\ell] \cup \{0\}$.	
Let $G = (V, E)$ be a simple and undirected graph.
We let $\mathcal{C}(G)$ denote the set of connected components of $G$.
Let $\prec$ be a linear order of the vertices $V$.
We say that two edges $uv, xy \in E$ with $u \prec v$, $x \prec y$, and $u \prec x$ \emph{cross} under $\prec$ if $u \prec x \prec v \prec y$ and they \emph{nest} under $\prec$ if $u \prec x \prec y \prec v$.
For a given integer $\ell$, let $\sigma \colon E \to [\ell]$ be a function that assigns each edge to one of $\ell$ \emph{pages}.
We call the \emph{linear layout} \linL a \emph{stack} (\emph{queue}) \emph{layout} if there are no two edges $uv, xy \in E$ with $\sigma(uv) = \sigma(xy)$ that cross (nest) under $\prec$.
In this case, we also refer to $\prec$ as the \emph{spine order} and to $\sigma$ as the \emph{page assignment} of \linL.
The \emph{conflict graph} $\stackConflict[G, \prec] = (E, F)$ of $G$ and $\prec$ for stack layouts contains one vertex for every edge $e \in E$.
Two edges $e_1, e_2 \in E$ are adjacent in $\stackConflict[G, \prec]$ if and only if $e_1$ and $e_2$ cross under $\prec$; see \Cref{fig:example-conflict-graph} for an example.
We write \stackConflict if $G$ and $\prec$ are clear from the context and define the conflict graph $\queueConflict[G, \prec]$ for queue layouts analogously with conflicts defined by nesting edges.
Observe that there is a one-to-one correspondence between page assignments of $\ell$-page stack (queue) layouts \linL and proper $\ell$-vertex colorings of the conflict graph.

For a graph $G = (V, E)$, let $\delta\colon E \to [\chi]$ be a function that assigns each edge $e \in E$ one of $\chi \leq m$ \emph{colors}.
We call $G = (V, E, \delta)$ an \emph{edge-colored} graph.
For an instance $\instance = \instanceLongQueue$ of \capacitatedStackGen (\capacitatedQueue), we define $n \coloneqq \Size{V}$, $m \coloneqq \Size{E}$, $\chi$ to be the number of colors, $\conComp \coloneqq \Size{\mathcal{C}(\mathcal{H})}$, and $\conCompSize \coloneqq \max\{\Size{V(C)} \mid C \in \mathcal{C}(\mathcal{H})\}$, where $\mathcal{H}$ denotes the corresponding conflict graph for stack (queue) layouts.
Moreover, we call $b_p(c)$ the \emph{budget} of color $c \in [\chi]$ on page $p \in [\ell]$.
We say that the instance \instance is \emph{balanced} if $b_p(c) = b_q(c)$ holds for every two pages $p,q \in [\ell]$ and color $c \in [\chi]$.
It is \emph{tight} if $\sum_{p \in [\ell],c \in [\chi]}b_p(c) = m$, i.e., the sum over all budgets equals the number of edges.
A stack (queue) layout \linL which respects the budget constraints imposed by \instance is called \emph{admissible} and constitutes a \emph{solution} to \instance.\footnote{Note that in principle the page assignment $\sigma$ would be sufficient as a witness for the existence of a solution since the spine order $\prec$ is part of the input.}

\begin{figure}
	\centering
	\includegraphics[page=1]{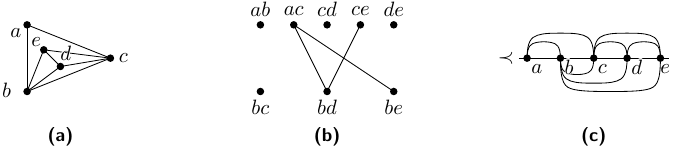}
	\caption{\textbf{\textsf{(a)}} An example graph $G$ and \textbf{\textsf{(b)}} its conflict graph $\stackConflict$ for the spine order $a \prec b \prec c \prec d \prec e$. Since $\stackConflict$ is bipartite, there exists a 2-page stack layout $\linL$.}
	\label{fig:example-conflict-graph}
\end{figure}

\section{Fixed Order 2-Page Stack Layouts}
\label{sec:stack-layouts}
We start our investigation of capacitated linear layouts by considering 2-page stack layouts.
As already noted in earlier work~\cite{DBLP:journals/jgaa/BhoreGMN20}, for a fixed spine order $\prec$, a graph $G$ has a 2-page stack layout $\linL$ if and only if the conflict graph $\stackConflict[G, \prec]$ is bipartite.
When pages have capacities, the situation becomes more intricate.
In particular, in the upcoming sections we will uncover an interplay between the structure of $\stackConflict$ and the number of colors $\chi$.

\subsection{Parameterizing by the Number of Connected Components of $\boldsymbol{\stackConflict}$}
\label{sec:stack-layouts-components}
Let $\instance = \instanceLongStack$ be an instance of \capacitatedStack.
As our first parameter, we consider the number $\conComp$ of connected components in $\stackConflict$.
As a ``warm-up'' exercise, we present a fixed-parameter algorithm in $\conComp$.
Later, in \Cref{sec:stack-layouts-components-size}, we justify this parameter choice by establishing \NP-hardness for unbounded $\conComp$.

\begin{restatable}\restateref{thm:stack-layouts-components-fpt}{theorem}{theoremStackComponentsFPT}
	\label{thm:stack-layouts-components-fpt}
    \capacitatedStackLongTwoPages can be solved constructively in $\BigO{2^{\conComp}\cdot (n + \chi) + n^2}$ time (which is \FPT\ in $\conComp = \Size{\mathcal{C}(\stackConflict[G, \prec])}$).
\end{restatable}
\begin{proofsketch}
    The graph $G$ must be planar to admit a 2-page stack layout, so we can compute $\stackConflict$ in $\BigO{n^2}$ time.
	Let $C_1, \ldots, C_{\conComp}$ be the connected components of $\stackConflict$.
    Every connected component $C_i$, $i \in [\conComp]$, must be bipartite, otherwise we are facing a negative instance.
    Within~$C_i$, edges in the same bipartition must be assigned to the same page.
    The chosen page assignment for $C_i$ is independent of the choice made for other connected components~$C_j$, $j \in [\conComp]$, $i \neq j$, which allows us to branch over all $2^{\conComp}$ possible page assignments.
\end{proofsketch}
\begin{prooflater}{ptheoremStackComponentsFPT}
	Let $\instance = \instanceLongStack$ be an instance of \capacitatedStack.
	Observe that $G$ must be planar to admit a 2-page stack layout, which we can check in $\BigO{n}$ time~\cite{HT.EPT.1974}.
	Thus, for the remainder of the proof we use the fact that $m \in \BigO{n}$ holds for any positive instance; instances where $G$ is not planar are rejected.
	
	We first compute, in $\BigO{n^2}$ time, the conflict graph $\stackConflict$ and check if it is bipartite; if not, we reject \instance as this is a necessary condition to admit a solution.
	Let $C_1, \ldots, C_{\conComp}$ be the connected components of $\stackConflict$.
	Consider a connected component $C_i = (E_i, F_i)$, $i \in [\conComp]$.
	Its bipartition into $E_i^1$ and $E_i^2$ is unique and encodes two possible page assignments for the edges of $G$ contained in $C_i$: Either $\sigma(e) = 1$ if $e \in E_i^1$ and $\sigma(e) = 2$ if $e \in E_i^2$ or vice-versa.
	Note that this choice is independent of the choice made for other connected components~$C_j$, $j \in [\conComp]$, $i \neq j$.
	Thus, we can branch on all $2^{\conComp}$ possible page assignments for the edges of $G$.
	Each branch corresponds to a 2-page stack layout $\linL$, and we can check in $\BigO{n + \chi}$ time if it is admissible.
	The instance \instance admits a solution if and only if one branch corresponds to an admissible layout.
	The running time follows by combining all steps.
\end{prooflater}

\subsection{Hardness for Many Constant-Sized Components in $\boldsymbol{\stackConflict}$}
\label{sec:stack-layouts-components-size}
In this section, we justify the use of $\conComp = \Size{\mathcal{C}(\stackConflict)}$ as parameter by establishing \NP-completeness of \capacitatedStack for unbounded $\conComp$.
Furthermore, our reduction shows that the problem remains \NP-complete even when every connected component of $\stackConflict$ has constant size.
To this end, we reduce from \probname{Monotone Not All Equal 3Sat-E4}, a \probname{3-Sat} variant that remains \NP-complete~\cite{DD.shv.2020}.
The input to this problem is a formula $\varphi$ over variables $X$ with clause set~$C$ such that each clause $c \in C$ consists of exactly three distinct positive literals and every variable $x \in X$ appears in exactly four clauses.
The question is whether there exists a truth assignment $\Psi\colon X \to \{0,1\}$ such that every clause contains two variables with different truth assignments, i.e., one that is assigned to true and one that is assigned to false.

\newcommand{\stateStackVariableGadgetObservation}{
\begin{observation}
	\label{obs:stack-layouts-components-size-variable}
	In every admissible 2-page stack layout \linL of the variable gadget for a variable $x \in X$ we have $\sigma(s_xt_x) \neq \sigma(u_xv_x) = \sigma(a_xb_x^i)$ for every $i = 1,2,3,4$.
\end{observation}
}

\subparagraph*{Overview of the Reduction.}
The high-level idea of our reduction is as follows.
We construct an instance \instance where we introduce for every variable $x \in X$ eight edges of a unique color $\col{x}$.
Four of these edges\onlyShort{, $a_xb_x^i$, $i = 1,2,3,4$} are placed in the gadget for the variable $x$\onlyShort{; see also \Cref{fig:stack-layout-component-size-gadgets}a}.
\shortLong{%
Thanks to the edge $s_xt_x$, which has a different color $d_x$, these edges are forced to be placed on the same page and will encode the variable assignment of $x$.
We set $b_1(d_x) = b_2(d_x) = 1$ and use an additional edge $u_xv_x$ of color $d_x$ to guarantee that an edge of color $d_x$ will be placed on both pages. 
We can make the following useful observation:
\stateStackVariableGadgetObservation
}{
These edges are forced to be placed on the same page and will encode the truth assignment $\Psi(x)$.
}

Each of the remaining four edges is part of a clause gadget for a clause $c \in C$ with $x \in c$; recall that $x$ appears in exactly four clauses.
\shortLong{%
The clause gadget for $c$ consists of three  edges $a_{q}^cb_{q}^c$, one for each literal $q \in c$ which is colored $\col{q}$; recall that $\varphi$ is monotone.
We set $b_1(\col{x}) = b_2(\col{x}) = 4$ and by \Cref{obs:stack-layouts-components-size-variable}, this transports the variable assignment from the variable gadgets into the clause gadgets.
To validate an assignment, we introduce three edges $u_q^cv_q^c$ of color $\col{c}$, one for each literal $q \in c$.
The edges are placed on the spine such that $a_{q}^cb_{q}^c$ and $u_q^cv_q^c$ must be on different pages; see also \Cref{fig:stack-layout-component-size-gadgets}b.
We set $b_1(\col{c}) = b_2(\col{c}) = 2$, which allows us to relate solutions to $\varphi$ and solutions to \instance:
}{
The variable gadget enforces four of eight edges to be on the same page.
We can set the budget for $\col{x}$ such that the four remaining edges must be on the same page independent of their location along the spine.
This allows us to transport the variable assignment from the variable gadgets into the clause gadgets.
Finally, the gadget for every clause $c \in C$ also contains four disjoint edges of a unique color $\col{c}$.
Three of these edges are in conflict with the edges for the variables, each of them with an edge of a different variable $x \in c$.
The last edge has no conflict.
By setting the budget of $\col{c}$ accordingly, we can ensure that at most two variable-edges in the clause are on the same page, i.e., for every clause $c$ there is one variable set to true and one to false.
In the following, we describe the gadgets in detail.
}

\begin{figure}
	\centering
	\includegraphics[page=1]{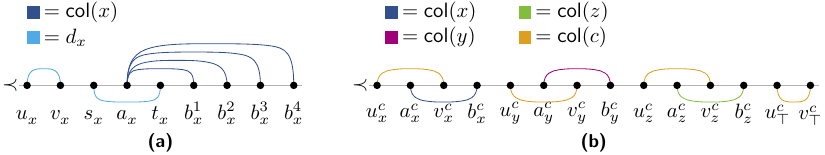}
	\caption{Gadgets representing \textbf{\textsf{(a)}} the variable $x$ in its true state and \textbf{\textsf{(b)}} the clause $c = (x \lor y \lor z)$ where $x$ and $z$ are set to true and $y$ to false in the reduction for \Cref{thm:stack-layouts-components-size}.}
	\label{fig:stack-layout-component-size-gadgets}
\end{figure}

\begin{statelater}{stackComponentSizeGadgets}
\subparagraph*{Variable Gadget.}
Let $x \in X$ be a variable of $\varphi$.
The variable gadget for $x$ consists of a \emph{selecting} edge $s_xt_x$ and a \emph{dummy} edge $u_xv_x$.
Moreover, it consists of four \emph{variable} edges $a_xb_x^i$, $i = 1,2,3,4$.
The order of the endpoints of these edges is $u_x \prec v_x \prec s_x \prec a_x \prec t_x \prec b_x^1 \prec b_x^2 \prec b_x^3 \prec b_x^4$; see also \Cref{fig:stack-layout-component-size-gadgets}a.
All edges of the form $a_xb_x^i$, $i = 1,2,3,4$, receive the same color $\col{x}$.
The two other edges receive the same color which is different from $\col{x}$, i.e.,
$\delta(s_xt_x) = \delta(u_xv_x) = d_x \neq \col{x}$.
We distribute the budget for the colors on the pages as follows.
For color $d_x$, we set $b_1(d_x) = b_2(d_x) = 1$, and for color $\col{x}$, we set $b_1(\col{x}) = b_2(\col{x}) = 4$.
The edges $s_xt_x$ and $u_xv_x$ will be the only two edges of color $d_x$.
\onlyLong{
Together with their placement with respect to the variable edges, we can make the following observation; recall also \Cref{fig:stack-layout-component-size-gadgets}a.
\stateStackVariableGadgetObservation
}

\subparagraph*{Clause Gadget.}
Let $c \in C$ be a clause of $\varphi$.
Since $\varphi$ is monotone, we have that $c = (x \lor y \lor z)$ for some $x,y,z \in X$.
The clause gadget for $c$ consists of three \emph{literal} edges $a_{x}^cb_{x}^c$, $a_{y}^cb_{y}^c$, and~$a_z^cb_{z}^c$.
Moreover, it contains four \emph{validator} edges $u_x^cv_x^c$, $u_{y}^cv_{y}^c$, $u_{z}^cv_{z}^c$, and $u_{\top}^cv_{\top}^c$.
We use the order $u_{q}^c \prec a_{q}^c \prec v_{q}^c \prec b_{q}^c$ among the endpoints of the literal and validator edge corresponding to the literal $q \in \{x,y,z\}$.
In the end, we concatenate the respective total orders and append $u_{\top}^c \prec v_{\top}^c$; see \Cref{fig:stack-layout-component-size-gadgets}b.
We re-use the color of a variable for its corresponding literal edge.
More formally, we have $\delta({a_{q}^cb_{q}^c}) = \col{q}$ for $q \in \{x, y, z\}$.
Moreover, we color the validator edges in a unique color $\col{c}$, i.e., $\delta(u_{q}^cv_{q}^c) = \col{c}$, for $q \in \{x,y,z,\top\}$, and set the budget $b_1(\col{c}) = b_2(\col{c}) = 2$.
The budget constraint together with the relative placement of these edges allows us to establish the following lemma, which will play a central role when establishing correctness of the reduction.    
\end{statelater}

\begin{restatable}\restateref{lem:stack-layouts-components-size-clause}{lemma}{lemmaStackComponentsSizeClause}
	\label{lem:stack-layouts-components-size-clause}
	Let \linL be an admissible 2-page stack layout of the constructed instance that contains the clause gadget for a clause $c \in C$ with $c = (x \lor y \lor z)$.
	We have $\sigma(a_{q}^cb_{q}^c) = \sigma(s_qt_q)$ for every $q \in \{x,y,z\}$.
	Moreover, there exist two literals $p,q \in \{x,y,z\}$ such that $\sigma(a_{p}^cb_{p}^c) \neq \sigma(a_{q}^cb_{q}^c)$.
\end{restatable}
\begin{proofsketch}
	Let $c \in C$ be a clause of $\varphi$ with $c = (x \lor y \lor z)$.
	For the first statement, assume that there exists a $q \in \{x,y,z\}$ such that $\sigma(a_{q}^cb_{q}^c) \neq \sigma(s_qt_q)$.
	By \Cref{obs:stack-layouts-components-size-variable}, we have that $\sigma(a_qb_q^i) \neq \sigma(s_qt_q)$ for every $i = 1,2,3,4$.
	Thus, there are at least five edges of color $\col{q}$ on the same page $\sigma(a_{q}^cb_{q}^c)$, which is a contradiction.
	For the second statement, assume that there exists an admissible 2-page stack layout with $\sigma(a_{x}^cb_{x}^c) = \sigma(a_{y}^cb_{y}^c) = \sigma(a_{z}^cb_{z}^c)$.
	From $u_{q}^c \prec a_{q}^c \prec v_{q}^c \prec b_{q}^c$ it follows that $\sigma(a_{q}^cb_{q}^c) \neq \sigma(u_{q}^cv_{q}^c)$ for every $q \in \{x,y,z\}$ and hence $\sigma(u_{x}^cv_{x}^c) = \sigma(u_{y}^cv_{y}^c) = \sigma(u_{z}^cv_{z}^c)$.
	Thus, there exists a page that contains at least three edges of color $\col{c}$, which is a contradiction.
\end{proofsketch}
\begin{prooflater}{plemmaStackComponentsSizeClause}
	Let $c \in C$ be a clause of $\varphi$ with $c = (x \lor y \lor z)$.
	For the first statement, assume for the sake of a contradiction that there exists a $q \in \{x,y,z\}$ such that $\sigma(a_{q}^cb_{q}^c) \neq \sigma(s_qt_q)$.
	By \Cref{obs:stack-layouts-components-size-variable}, we have that $\sigma(a_qb_q^i) \neq \sigma(s_qt_q)$ for every $i = 1,2,3,4$.
	Thus, there are at least five edges of color $\col{q}$ on the same page $\sigma(a_{q}^cb_{q}^c)$.
	However, we have $b_1(\col{q}) = b_2(\col{q}) = 4$.
	Consequently, \linL cannot be an admissible solution.
	 
	To show that also the second statement is true, assume for the sake of a contradiction that there exists an admissible 2-page stack layout with $\sigma(a_{x}^cb_{x}^c) = \sigma(a_{y}^cb_{y}^c) = \sigma(a_{z}^cb_{z}^c)$.
	From $u_{q}^c \prec a_{q}^c \prec v_{q}^c \prec b_{q}^c$ it follows that $\sigma(a_{q}^cb_{q}^c) \neq \sigma(u_{q}^cv_{q}^c)$ for every $q \in \{x,y,z\}$ and hence $\sigma(u_{x}^cv_{x}^c) = \sigma(u_{y}^cv_{y}^c) = \sigma(u_{z}^cv_{z}^c)$.
	Recall that all validator edges have color $\col{c}$.
	Thus, there exists a page that contains at least three edges of color $\col{c}$.
	However, we have $b_1(\col{c}) = b_2(\col{c}) = 2 < 3$, which contradicts the assumption that $\linL$ was an admissible solution.
	Consequently, $\sigma(a_{x}^cb_{x}^c) = \sigma(a_{y}^cb_{y}^c) = \sigma(a_{z}^cb_{z}^c)$ cannot hold, and we can find $p,q \in c$ that fulfills the second claim from the statement.
\end{prooflater}

\shortLong{
 As for the variable gadget, we add a fourth edge $u_\top^cv_\top^c$ of color $\col{c}$ to guarantee that every page receives precisely two edges of color $\col{c}$.
 We observe that the variable gadgets induce the largest connected components of $\stackConflict$; recall \Cref{fig:stack-layout-component-size-gadgets} and use \Cref{lem:stack-layouts-components-size-clause} to show:
}{
\subparagraph*{Putting Everything Together.}
For an instance $\varphi$ of \probname{Monotone Not All Equal 3Sat-E4}, we arbitrarily order the variables $X$ and clauses $C$.
We construct the variable gadget and clause gadget for every variable $x \in X$ and clause $c \in C$.
Finally, we order them on the spine $\prec$ by first placing the variable gadgets in increasing order, followed by the clause gadgets, again in increasing order.
We observe that the variable gadgets induce the largest connected components of $\stackConflict$; recall \Cref{fig:stack-layout-component-size-gadgets}.
Using \Cref{lem:stack-layouts-components-size-clause}, we obtain:
}
\begin{restatable}\restateref{thm:stack-layouts-components-size}{theorem}{theoremStackComponentsSize}
	\label{thm:stack-layouts-components-size}
	\capacitatedStackLongTwoPages is \NP-complete even when restricted to balanced and tight instances $\instanceLongStack$ 
    where every connected component of $\stackConflict[G,\prec]$ has at most five vertices.
\end{restatable}
\begin{prooflater}{ptheoremStackComponentsSize}
	Containment in \NP\ is immediate.
	To show \NP-hardness, let $\varphi$ be an instance of \probname{Monotone Not All Equal 3Sat-E4} on $\eta$ variables $X$ and $\mu$ clauses $C$, and let $\instance = \instanceLongStack$ be the instance of \capacitatedStack obtained from the above-described reduction.
	Since every gadget has constant size and is constructed precisely once for every variable and clause of $\varphi$, we can conclude that the size of $\instance$ is in $\BigO{\eta + \mu}$.
	Moreover, the number of colors is also linear in $\eta + \mu$.
	The instance can be constructed in the same time.
	Regarding the size of the connected components in $\stackConflict$, we can observe that every variable gadget induces two connected components:
	One corresponds to the single edge $u_xv_x$ which is in conflict with no other edge.
	The other edges of the variable gadget induce a single connected component of size five in $\stackConflict$; recall \Cref{fig:stack-layout-component-size-gadgets}a.
	Clause gadgets induce connected components of size at most two in $\stackConflict$.
	Since all gadgets are disjoint with respect to $\prec$, the largest connected component of $\stackConflict$ has size at most five.
	We now establish correctness of the reduction.
	
	\proofsubparagraph{($\boldsymbol{\Rightarrow}$)}
	Assume that $\varphi$ is a positive instance and let $\Psi\colon X \to \{0,1\}$ be a witnessing truth assignment.
	We now construct based on $\Psi$ a page assignment $\sigma$ such that $\linL$ witnesses that \instance is a positive instance of \capacitatedStack.
	To this end, consider a variable $x \in X$ and its corresponding variable gadget.
	If $x$ is set to false, we assign the edge $s_xt_x$ to the first page and otherwise to the second page.
	The remaining edges are assigned to the respective other page (as also prescribed by the necessary page assignment from \Cref{obs:stack-layouts-components-size-variable}).
	This assignment is crossing-free since the variable edges are only in conflict with the selecting edge $s_xt_x$; the dummy edge $u_xv_x$ does not cross any other edge.
	Moreover, we assigned to each page precisely one edge of color $d_x$, which matches the budget for these pages.
	
	We apply the ``mirrored'' page assignment in the clause gadgets.
	More formally, for a clause $c = (x \lor y \lor z) \in C$, we assign the literal edge $a_{q}^cb_{q}^c$ to the page $\sigma(s_qt_q)$ for every $q \in \{x,y,z\}$.
	The validator edge $u_{q}^cv_{q}^c$ is assigned to the respective other page, i.e., it holds $\sigma(u_{q}^cv_{q}^c) \neq \sigma(a_{q}^cb_{q}^c)$.
	This ensures that the clause gadget will be crossing-free.
	For the last validator edge, i.e., $u_{\top}^cv_{\top}^c$, we consider the unique variable $q \in \{x,y,z\}$ whose truth assignment is different from the other two.
	We set $\sigma(u_{\top}^cv_{\top}^c) = \sigma(u_{q}^cv_{q}^c)$, i.e., assign $u_{\top}^cv_{\top}^c$ to the same page as $u_{q}^cv_{q}^c$.
	This way each page contains two edges of color $\col{c}$, which matches the budget for color $\col{c}$.
	Finally, regarding the color $\col{q}$ for some $q \in X$, we make the following observation.
	Since $\sigma(s_qt_q) = \sigma(a_{q}^cb_{q}^c)$, we conclude $\sigma(a_qb_q^i) \neq \sigma(a_{q}^cb_{q}^c)$ for every $i = 1,2,3,4$.
	Since every variable appears in exactly four clauses, we derive that every page contains four edges of color $\col{q}$, and we also have $b_1(\col{q}) = b_2(\col{q}) = 4$.
	Combining all, we conclude that \linL is a solution to \instance.
	
	\proofsubparagraph{($\boldsymbol{\Leftarrow}$)}
	Assume that \instance is a positive instance of \capacitatedStack and let \linL be a witnessing solution.
	We now construct a variable assignment $\Psi\colon X \to \{0,1\}$ that satisfies $\varphi$.
	To this end, consider a variable $x \in X$.
	We set it to true if and only if $\sigma(s_xt_x) = 2$.
	To see that $\Psi$ is a solution to $\varphi$, consider an arbitrary clause $c = (x \lor y \lor z)\in C$.
	Since \linL is a solution, we know from \Cref{lem:stack-layouts-components-size-clause} that there exist two literals $p,q\in \{x,y,z\}$ such that $\sigma(a_{p}^cb_{p}^c) \neq \sigma(a_{q}^cb_{q}^c)$.
	Since the very same lemma guarantees $\sigma(a_{q}^cb_{q}^c) = \sigma(s_qt_q)$ for every $q \in \{x,y,z\}$, we conclude that there are two variables $p,q \in \{x, y, z\}$ with $\Psi(p) \neq \Psi(q)$.
	These two variables NAE-satisfy the clause $c$ and since $c$ was selected arbitrarily, we derive that $\Psi$ is a solution to $\varphi$.
\end{prooflater}

\subsection{Parameterizing by the Number of Colors}
\label{sec:stack-layouts-colors}
In this section, we consider the number $\chi$ of colors as parameter.
In particular, we show with \Cref{thm:stack-layouts-colors} that \capacitatedStackLongTwoPages is in \XP\ and \W[1]-hard with respect to this parameter.
We establish \XP-membership in \Cref{lem:stack-layouts-colors-xp} and provide the complementing hardness result in \Cref{lem:stack-layouts-colors-w-one}.

\subparagraph*{\textsf{XP}-Membership.}
Let $\instance = \instanceLongStack$ be an instance of \capacitatedStack.
We provide a dynamic programming (DP) algorithm to establish \XP-membership.
At a high-level, we iteratively consider the connected components of $\stackConflict$.
Using a $\BigO{\conComp \cdot (m + 1)^{\chi}}$-sized DP-table~$D$, we track for every \emph{color vector} $\overline{\mu} = (\mu_1, \ldots, \mu_{\chi})$ whether it is possible to have $\mu_c \in [m]_0$ edges of color $c \in [\chi]$ on the first page using the first $i \in [\conComp]_0$ connected components of~\stackConflict.
The respective number of colored edges on the second page can be obtained by simple arithmetic.
We check in the end whether a vector matching our budgets can be realized.

\begin{restatable}\restateref{lem:stack-layouts-colors-xp}{lemma}{lemmaStackColorsXP}
	\label{lem:stack-layouts-colors-xp}
    \capacitatedStackLongTwoPages can be solved constructively in $\BigO{\chi \cdot \conComp \cdot (m + 1)^{\chi} + n^2}$ time (which is \XP\ in the number of colors $\chi$).
\end{restatable}
\begin{figure}
	\centering
	\includegraphics[page=1]{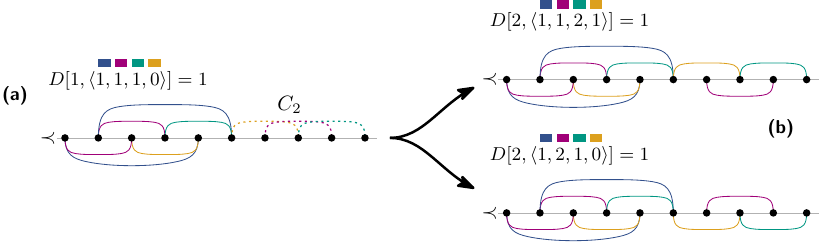}
	\caption{\textbf{\textsf{(a)}} An admissible 2-page stack layout for the edges $E(C_1)$. The edges of $E(C_2)$ are indicated with dashed lines. \textbf{\textsf{(b)}} The two possible solutions that we can obtain when adding $E(C_2)$. We indicate above the layout the corresponding color vectors and table entries.}
	\label{fig:stack-layout-dp-transition}
\end{figure}
\begin{proofsketch}
As $G$ must be planar to admit a solution, we can compute the conflict graph~$\stackConflict$ in $\BigO{n^2}$ time.
Consider an arbitrary order $C_1, \ldots, C_{\conComp}$ of its connected components.
Let $i \in [\conComp]$ be an index and let $\overline{\mu} = (\mu_1, \ldots, \mu_{\chi})$ be a color vector with $\mu_c \in [m]_0$ for every color $c \in [\chi]$.
We set $D[i, \overline{\mu}] = 1$ if and only if there exists a layout $\langle\prec, \sigma_i\rangle$ for the edges $\bigcup_{j \in [i]} E(C_j)$ with $\mu_c$ edges of color $c$ on the first page, for every $c \in [\chi]$, i.e., $\langle\prec, \sigma_i\rangle$ \emph{induces}~$\overline{\mu}$.
As the base case, we have $D[0, \overline{\mu}] = 1$ if and only if $\overline{\mu} = \langle 0, \ldots, 0\rangle$.
For general $i \geq 1$, we observe that $C_i$ must be bipartite (otherwise, \instance is a negative instance).
Thus, there are two page assignments for the edges represented by $C_i$, which induce two color vectors $\overline{a_1}$ and $\overline{a_2}$, respectively.
We set $D[i, \overline{\mu}] = 1$ if and only if there exists a $j \in \{1,2\}$ such that $D[i - 1, \overline{\mu} - \overline{a_j}] = 1$ holds.
This corresponds to extending a solution $\langle\prec, \sigma_{i - 1}\rangle$ inducing $\overline{\mu} - \overline{a_j}$ to a solution $\langle\prec, \sigma_i\rangle$ inducing $\overline{\mu}$ by using the first (second) page assignment for the edges represented by $C_i$; see also \Cref{fig:stack-layout-dp-transition}.

Evaluating a single entry $D[i, \overline{\mu}]$ takes $\BigO{\chi}$ time.
There are $\BigO{{(m + 1)}^{\chi}}$ different color vectors and we can fill $D$ in $\BigO{\chi \cdot \conComp \cdot (m+1)^{\chi}}$ time.
\instance admits a solution if and only if we can find a color vector $\overline{\mu} = (\mu_1, \ldots, \mu_{\chi})$ with $D[\conComp, \overline{\mu}] = 1$ such that $\mu_c \leq b_1(c)$, and $\Size{\delta^{-1}(c)} - \mu_c \leq b_2(c)$ holds for every color $c \in [\chi]$.
\end{proofsketch}
\begin{prooflater}{plemmaStackColorsXP}
Let $\instance = \instanceLongStack$ be an instance of \capacitatedStackLongTwoPages.
Since $G$ must be planar to admit a 2-page stack layout, we assume that $m \in \BigO{n}$ holds.
As a preprocessing step, we compute the conflict graph $\stackConflict$ in $\BigO{n^2}$ time.
We reject \instance if it is not bipartite, which is a necessary condition.
Otherwise, we consider an arbitrary order $C_1, \ldots, C_{\conComp}$ of its connected components; recall $\conComp = \Size{\mathcal{C}(\stackConflict)}$.

We are now ready to describe our dynamic programming (DP) algorithm.
Let $i \in [\conComp]_0$ be an index and $\overline{\mu} = (\mu_1, \ldots, \mu_{\chi})$ be a \emph{color vector} with $\mu_c \in [m]_0$ for every color $c \in [\chi]$.
Note that there are $\BigO{(m+1)^{\chi}}$ different color vectors.
We will make use of a binary DP-table $D$ where we set $D[i, \overline{\mu}] = 1$ if and only if there exists a solution $\langle\prec, \sigma_i\rangle$ for the edges $\bigcup_{j \in [i]} E(C_j)$ with $\mu_c$ edges of color $c$ on the first page, for every $c \in [\chi]$.
In this case, the solution $\langle\prec, \sigma_i\rangle$ \emph{induces} $\overline{\mu}$.
As our base case, we define $D[0, \overline{\mu}] = 1$ if and only if $\overline{\mu} = \langle 0, \ldots, 0\rangle$.
Clearly, without placing any edge, any solution can only induce the zero vector.
For some $i \in [\conComp]$, observe that the connected component $C_i$ is bipartite.
As a consequence, there are precisely two possible crossing-free page assignments for the edges from $C_i$.
Let $\overline{a_1} = (\mu_1^1, \ldots, \mu_{\chi}^1)$ and $\overline{a_2} = (\mu_1^2, \ldots, \mu_{\chi}^2)$ be their corresponding color vectors for the first page.
Let $\overline{\mu}$ be a color vector.
We set $D[i, \overline{\mu}] = 1$ if and only if there exists a $j \in \{1,2\}$ such that $\overline{\mu} - \overline{a_j}$ has only non-negative entries and $D[i - 1, \overline{\mu} - \overline{a_j}] = 1$ holds.
This is correct since a solution that witnesses $D[i - 1, \overline{\mu} - \overline{a_j}] = 1$ can be extended to a solution for $D[i, \overline{\mu}] = 1$ by using the corresponding page assignment; see also \Cref{fig:stack-layout-dp-transition}.
Moreover, a solution that witnesses $D[i, \overline{\mu}] = 1$ also witnesses $D[i - 1, \overline{\mu} - \overline{a_j}] = 1$ by removing the edges from $C_i$.
Evaluating a single entry $D[i, \overline{\mu}]$ takes $\BigO{\chi}$ time.
Since $D$ has $\BigO{\conComp \cdot (m+1)^{\chi}}$ entries, we can compute $D$ in $\BigO{\chi \cdot \conComp \cdot (m+1)^{\chi}}$ time.

Once the table $D$ has been computed, we need to check whether some color vector fulfills the budget constraints.
To this end, consider a color vector $\overline{\mu} = (\mu_1, \ldots, \mu_{\chi})$.
Observe that a solution $\linL$ that induces $\overline{\mu}$ has $\Size{\delta^{-1}(c)} - \mu_c$ edges of color $c$ on page two, for every $c \in [\chi]$.
Using this observation, we conclude that \instance admits a solution if and only if there exists a color vector $\overline{\mu} = (\mu_1, \ldots, \mu_{\chi})$ such that $D[\conComp, \overline{\mu}] = 1$, $\mu_c \leq b_1(c)$, and $\Size{\delta^{-1}(c)} - \mu_c \leq b_2(c)$ holds for every color $c \in [\chi]$.
We can test this in $\BigO{(m+1)^{\chi}}$ time as a postprocessing step.
Therefore, our DP-algorithm has a running time of $\BigO{\chi \cdot \conComp \cdot (m+1)^{\chi} + n^2}$ and the statement follows.
\end{prooflater}

\subparagraph*{\textsf{W}[1]-Hardness.}
Next, we show that \capacitatedStack is \W[1]-hard when parameterized by the number of colors $\chi$.
To this end, we devise a parameterized reduction from the problem \probname{Multidimensional Subset Sum}, defined as follows.
Given a multiset $S$ of $\eta$ $d$-dimensional vectors over $\mathbb{N}_0^d$ and a $d$-dimensional target vector $\overline{t}$, does there exist a subset $S' \subseteq S$ such that $\sum_{\overline{s} \in S'} \overline{s} = \overline{t}$?
The problem is known to be \W[1]-hard when parameterized by the dimension~$d$, even when the numbers are unary encoded~\cite{GKO.SPB.2020}.

\shortLong{
We represent an integer $x \in \mathbb{N}$ by a \emph{number gadget}, which is a monochromatic star $K_{1,x}$ with center $u_x$ and leaves $v_x^1, \ldots, v_x^x$.
A \emph{vector gadget} represents a $d$-dimensional vector $\overline{s}$ by nesting $d$ number gadgets and coloring them with colors $1, \ldots, d$.
Since we only have two pages, we can use an additional edge $a_{\overline{s}}b_{\overline{s}}$ (in color $d + 1$) to ensure that all edges of a vector gadget must be placed on the same page; see \Cref{fig:stack-layout-colors-vector} for a visualization and \Cref{app:stack-layouts-colors} for the formal construction.
This allows us to make the following crucial observation:
}{
In our reduction, we have overall $d+1$ colors.
Each number is represented as a star and a vector is encoded as a set of $d$ nested stars, one in each color $c \in [d]$.
Since we only have two pages, we can use an additional edge in color $d + 1$ to force all of these edges to be placed on the same page\onlyShort{; see also \Cref{fig:stack-layout-colors-vector}}.
Placing them on the first page represents membership in the solution set $S' \subseteq S$ and by appropriately setting the budget for both pages, we can ensure equivalence between the solutions to both problems.
}

\begin{figure}
	\centering
	\includegraphics[page=1]{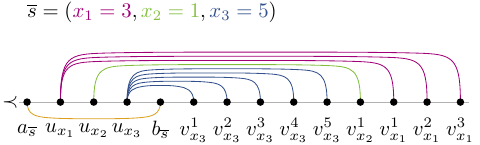}
	\caption{The vector gadget which features the number gadgets for 3, 1, and 5, respectively.}
	\label{fig:stack-layout-colors-vector}
\end{figure}

\begin{statelater}{stackColorsNumberGadget}
    \subparagraph*{Number Gadget.}
    The \emph{number gadget} for an integer $x \in \mathbb{N}$, $x \geq 1$, consists of the vertices $u_x$ and $v_x^i$, $i \in [x]$, arranged in the order $u_x \prec v_x^1 \prec \ldots \prec v_x^{x}$.
    We connect $u_x$ with every $v_x^i$, $i \in [x]$, i.e., introduce a star $K_{1,x}$ with center $u_x$; see also \Cref{fig:stack-layout-colors-vector}.
    The number $x = 0$ is represented by the empty number gadget.
    To represent a vector $\overline{s} = (x_1, \ldots, x_d)$, we combine $d$ number gadgets into a \emph{vector gadget} as follows, where $a_{\overline{s}}b_{\overline{s}}$ is a new edge that we will use to ensure correctness of the reduction; see again \Cref{fig:stack-layout-colors-vector} for an example.
    \begin{align*}
    	a_{\overline{s}} \prec u_{x_1} \prec u_{x_2} \prec \ldots \prec u_{x_d} \prec b_{\overline{s}} \prec v_{x_d}^1 \prec \ldots \prec v_{x_d}^{x_d} \prec v_{x_{d - 1}}^1 \prec \ldots \prec v_{x_1}^{x_1}
    \end{align*}
    We color the edges of the number gadgets based on the dimension they represent, i.e., we have $\delta(u_{x_i}v_{x_i}^j) = i$ for every $i \in [d]$, $j \in [x_i]$.
    Moreover, we color the edge $a_{\overline{s}}b_{\overline{s}}$ in the color ``for'' dimension $d+1$: $\delta(a_{\overline{s}}b_{\overline{s}}) = d+1$.
    Note that the edge $a_{\overline{s}}b_{\overline{s}}$ is in conflict with every other edge from the vector gadget, which leads to the following observation.
\end{statelater}

\begin{figure}[b]
	\centering
	\includegraphics[page=1]{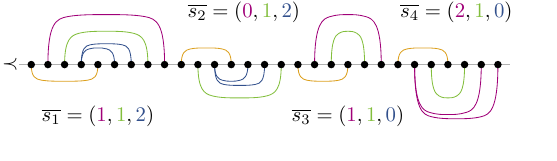}
	\caption{An instance of \capacitatedStack constructed for the instance $(S = \{\overline{s_1}, \overline{s_2}, \overline{s_3}, \overline{s_4}\}, \overline{t} = (2,2,2))$ of \probname{Multidimensional Subset Sum}. The layout depicts the solution $S' = \{\overline{s_1}, \overline{s_3}\}$.}
	\label{fig:stack-layout-colors-example-reduction}
\end{figure}
\begin{observation}
	\label{obs:stack-layouts-colors}
	In every 2-page stack layout \linL of the vector gadget for a vector $\overline{s} = (x_1, \ldots, x_d)$ we have $\sigma(a_{\overline{s}}b_{\overline{s}}) \neq \sigma(u_{x_i}v_{x_i}^j)$ for every $i \in [d]$ and $j \in [x_i]$.
	In particular, we have $\sigma(u_{x_y}v_{x_y}^i) = \sigma(u_{x_{z}}v_{x_{z}}^{j})$ for every $y, z \in [d]$, $i \in [x_y]$, and $j \in [x_{z}]$.
\end{observation}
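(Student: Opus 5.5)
The first claim follows by showing that the edge $a_{\overline{s}}b_{\overline{s}}$ crosses every other edge of the vector gadget. The second claim then follows because there are only two pages. Throughout, I use the spine order fixed in the construction of the vector gadget:
\[
a_{\overline{s}} \prec u_{x_1} \prec \ldots \prec u_{x_d} \prec b_{\overline{s}} \prec v_{x_d}^1 \prec \ldots \prec v_{x_1}^{x_1}.
\]

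Fix $i \in [d]$ and $j \in [x_i]$, and consider the edge $u_{x_i}v_{x_i}^j$. Its left endpoint $u_{x_i}$ lies strictly between $a_{\overline{s}}$ and $b_{\overline{s}}$. Every $v$-vertex is placed after $b_{\overline{s}}$, so its right endpoint $v_{x_i}^j$ lies strictly to the right of $b_{\overline{s}}$. Hence
\[
a_{\overline{s}} \prec u_{x_i} \prec b_{\overline{s}} \prec v_{x_i}^j,
\]
which is exactly the crossing pattern from \Cref{sec:preliminaries}. The two edges are independent, since all four endpoints are distinct vertices. Therefore $a_{\overline{s}}b_{\overline{s}}$ and $u_{x_i}v_{x_i}^j$ are adjacent in $\stackConflict$. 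In any stack layout they must lie on different pages, which gives $\sigma(a_{\overline{s}}b_{\overline{s}}) \neq \sigma(u_{x_i}v_{x_i}^j)$.

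For the second claim, note that $\sigma$ maps into $[2]$. Every number-gadget edge is assigned the unique page of $[2]\setminus\{\sigma(a_{\overline{s}}b_{\overline{s}})\}$. So any two such edges $u_{x_y}v_{x_y}^i$ and $u_{x_z}v_{x_z}^j$ share that page.

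There is essentially no obstacle here. The only point needing care is to confirm from the displayed order that every $v$-vertex, including those of $x_1$ placed last, lies after $b_{\overline{s}}$, and that every $u_{x_i}$ lies after $a_{\overline{s}}$. Both facts are immediate from the definition. Number gadgets with $x_i = 0$ contribute no edges and so impose no condition.
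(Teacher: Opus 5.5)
Your proof is correct and follows the paper's own argument: the paper justifies the observation by noting that $a_{\overline{s}}b_{\overline{s}}$ crosses every number-gadget edge of the vector gadget, exactly as you verify from the spine order. Since there are only two pages, all number-gadget edges must then share the remaining page.
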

\shortLong{
We add a vector $\overline{s}$ to the solution set $S'$ of an instance $(S, \overline{t} = (t_1, \ldots, t_d))$ of \probname{Multidimensional Subset Sum} if and only if the edge $a_{\overline{s}}b_{\overline{s}}$ is assigned to the second page; see also \Cref{fig:stack-layout-colors-example-reduction} for a small example.
To ensure correctness of the reduction, we set $b_1(c) = t_c$ and $b_2(c) = \Size{\delta^{-1}(c)} - t_c$ for every $c \in [d]$.
Finally, we set $b_1(d + 1) = b_2(d + 1) = \Size{S}$.
Observe that due to their budget $b_p(c)$ both pages $p = 1,2$ must contain $b_p(c)$ edges of color $c$ in a solution, for every color $c \in [d]$.
Together with \Cref{obs:stack-layouts-colors}, this allows us to show:%
}{
We now use \Cref{obs:stack-layouts-colors} to establish correctness of our reduction; see also \Cref{fig:stack-layout-colors-example-reduction} for a small example of the construction.
}%

\begin{restatable}\restateref{lem:stack-layouts-colors-w-one}{lemma}{lemmaStackColorsWOne}
	\label{lem:stack-layouts-colors-w-one}
	\capacitatedStackLongTwoPages is \W\textup{[1]}-hard when parameterized by the number of colors $\chi$.
\end{restatable}
\begin{prooflater}{plemmaStackColorsWOne}
	Let $(S, \overline{t})$ be an instance of \probname{Multidimensional Subset Sum} on $d$ dimensions with $\overline{t} = (t_1, \ldots, t_d)$ such that every number can be bounded by a polynomial in $\eta = \Size{S}$; we recall that \probname{Multidimensional Subset Sum} remains \W[1]-hard parameterized by $d$ in this case~\cite{GKO.SPB.2020}.
	Consider an arbitrary order of the $\eta$ vectors in $S = \{\overline{s_1}, \ldots, \overline{s_{\eta}}\}$.
	We now create an instance of \capacitatedStack by first creating the vector gadget for every vector $\overline{s_i}$, $i \in [\eta]$ and placing it on the spine $\prec$ ordered by index.
	For every color $c \in [d]$, we set $b_1(c) = t_c$ and $b_2(c) = (\sum_{(\ldots, x_c, \ldots) \in S} x_c) - t_c = \Size{\delta^{-1}(c)} - t_c$.
	Finally, we set $b_1(d + 1) = b_2(d + 1) = \eta$.
	Let $\instance = \instanceLongStack$ be the obtained instance.
	We observe that its size is linear in $\eta + \sum_{\overline{s} \in S}\sum_{x_i \in \overline{s}} x_i$, which is polynomial in $\eta$.
	Moreover, we have $\chi = d + 1$.
	It remains to show correctness of the reduction to obtain \W[1]-hardness by $\chi$.
	
	\proofsubparagraph{($\boldsymbol{\Rightarrow}$)}
	Assume that $(S, \overline{t})$ is a positive instance of \probname{Multidimensional Subset Sum} and let $S' \subseteq S$ be a set of vectors such that $\sum_{\overline{s} \in S'} \overline{s} = \overline{t}$.
	We now construct a solution \linL for \instance as follows.
	For every $\overline{s_i} \in S$, $i \in [\eta]$, we set $\sigma(a_{\overline{s_i}}b_{\overline{s_i}}) = 1$ if and only if $\overline{s_i} \notin S'$.
	We assign all other edges of the corresponding vector gadget to the respective other page.
	Observe that no two edges of the vector gadget cross in this page assignment; note that $\sigma$ resembles a necessary page assignment established in \Cref{obs:stack-layouts-colors}.
	Moreover, the edges of the number gadgets are placed on the first page if and only if $\overline{s_i}$ is part of the solution, i.e., $\overline{s_i} \in S'$.
	Since all gadgets are disjoint with respect to $\prec$, the obtained linear layout \linL is a stack layout.
	To see that it is also admissible, we observe, on the one hand, that for every color $c \in [d]$ we have exactly $t_c = b_1(c)$ edges of color $c$ on the first page.
	This is because $S'$ is a solution to $(S, \overline{t})$ and thus we have $\sum_{(\ldots, x_c, \ldots) \in S'} x_c = t_c$.
	Consequently, the second page contains $(\sum_{(\ldots, x_c, \ldots) \in S} x_c) - t_c = b_2(c)$ edges of color $c$.
	Regarding color $d + 1$, since $b_1(d + 1) = b_2(d + 1) = \Size{S}$ and we have $\Size{S'}, \Size{S} - \Size{S'} \leq \Size{S}$, the budget for color $d + 1$ is fulfilled.
	Thus, \linL witnesses that \instance is a positive instance of \capacitatedStack.
	
	\proofsubparagraph{($\boldsymbol{\Leftarrow}$)}
	For the reverse direction, assume that \instance is a positive instance of \capacitatedStack and let \linL be a solution.
	To construct a solution $S'$ to $(S, \overline{t})$, we consider every vector $\overline{s_i} = (x_1, \ldots, x_d) \in S$.
	We add $\overline{s_i}$ to $S'$ if and only if $\sigma(a_{\overline{s_i}}b_{\overline{s_i}}) = 2$.
	To see that $S'$ is a solution, recall that, for every $\overline{s_i} \in S$, we have $\sigma(u_{x_y}v_{x_y}^j) = \sigma(u_{x_{z}}v_{x_{z}}^{j'})$ for every $y, z \in [d]$, $j \in [x_y]$, and $j' \in [x_{z}]$ by \Cref{obs:stack-layouts-colors}, i.e., all edges of the number gadgets for $\overline{s_i}$ are assigned to the same page.
	In particular, they are placed on the first page if and only if we have added $\overline{s_i}$ to the solution set $S'$.
	Consider a color $c \in [d]$.
	Since \linL is a solution to \instance, the first page contains at most $b_1(c) = t_c$ edges of color $c$ and the second page contains at most $b_2(c) = \Size{\delta^{-1}(c)} - t_c$ edges of color $c$.
	Moreover, $G$ contains precisely $\sum_{(\ldots, x_c, \ldots) \in S} x_c = b_1(c) + b_2(c)$ edges of color $c$.
	Thus, by applying these arguments for every color $c \in [d]$, we can conclude $\sum_{\overline{s} \in S'} \overline{s} = \overline{t}$, i.e., $(S, \overline{t})$ is a positive instance of \probname{Multidimensional Subset Sum}.
\end{prooflater}

The following theorem summarizes \Cref{lem:stack-layouts-colors-xp,lem:stack-layouts-colors-w-one}.
\begin{theorem}
	\label{thm:stack-layouts-colors}
	\capacitatedStackLongTwoPages is in \XP\ and \W\textup{[1]}-hard when parameterized by the number of colors $\chi$.
\end{theorem}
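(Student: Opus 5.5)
The theorem is simply the conjunction of \Cref{lem:stack-layouts-colors-xp} and \Cref{lem:stack-layouts-colors-w-one}, so the proof consists of invoking both and checking that each yields the claimed classification with respect to $\chi$.

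For \XP-membership, I would invoke \Cref{lem:stack-layouts-colors-xp}, which gives a constructive algorithm running in $\BigO{\chi \cdot \conComp \cdot (m + 1)^{\chi} + n^2}$ time. It remains to observe that this bound has the form $f(\chi)\cdot |\instance|^{g(\chi)}$. Since $\conComp \le m$ (the components partition the vertex set $E$ of $\stackConflict$), the running time is at most $\chi \cdot (m+1)^{\chi+1} + n^2$. We may assume $m \in \BigO{n}$, because non-planar inputs are rejected upfront. The exponent is therefore $\chi + 1$, a function of the parameter alone, which is exactly the definition of \XP.

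For \W[1]-hardness, I would invoke \Cref{lem:stack-layouts-colors-w-one}. The reduction there is a parameterized reduction from \probname{Multidimensional Subset Sum} with unary-encoded numbers, parameterized by the dimension $d$. The new parameter is $\chi = d+1$, and the instance size is polynomial in $\eta + \sum_{\overline{s}\in S}\sum_i x_i$. Both requirements of a parameterized reduction are therefore met. The only point needing care is that the source problem is \W[1]-hard \emph{under unary encoding}~\cite{GKO.SPB.2020}; without it, the number gadgets (stars $K_{1,x}$) would not have polynomial size. Since the lemma already uses this version, no extra work is needed.

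Combining the two parts gives the theorem. I expect no real obstacle, since the substance lies in the two lemmas. The step most worth stating explicitly is the bound $\conComp \le m$ together with planarity, which turns the lemma's running time into a bona fide \XP\ bound.
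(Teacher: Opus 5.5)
Your proposal is correct and matches the paper, which states the theorem simply as the conjunction of \Cref{lem:stack-layouts-colors-xp,lem:stack-layouts-colors-w-one}. Your added remark that $\conComp \le m$ turns the lemma's running time into a bound of the form $|\instance|^{\BigO{\chi}}$ is a correct and harmless elaboration of the \XP\ part; planarity is not actually needed for this, since $m \le n^2$ already suffices.
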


\subsection{An \textsf{FPT}-Algorithm for Few Colors and Bounded-Size Components}
\label{sec:stack-layouts-colors-components-size}
In this section, we establish our last result for \capacitatedStack, namely a fixed-parameter tractable algorithm when parameterized by the number of colors $\chi$ and the size of the largest connected component in the conflict graph $\conCompSize = \max \{\Size{V(C)} \mid C \in \mathcal{C}(\stackConflict)\}$.
Note that both parameters are necessary for an \FPT-algorithm since we have shown that the problem is \W[1]-hard parameterized by~$\chi$ (\Cref{thm:stack-layouts-colors}) and para\NP-hard by $\conCompSize$ (\Cref{thm:stack-layouts-components-size}).
In the following, we establish a more general result in which $\ell$ is not assumed to be $2$ but a parameter.
While this is not necessary to establish the result of this section, it will later allow us to show that \capacitatedQueue is \FPT\ with respect to $\chi$, $\conCompSize$, \emph{and} the number of pages $\ell$.

Let $\instance = \instanceLongQueue$ be an instance of \capacitatedStackGen.
On a high level, since both $\chi$ and $\conCompSize$ are bounded, we can assign to each connected component of $\stackConflict$ one of $g(\chi, \conCompSize)$-many types, for some computable function $g$.
Moreover, since also the number of pages is bounded, each type has at most $h(\ell, \conCompSize)$ different valid, i.e., crossing-free, page assignments, for some computable function $h$.
We can observe that from the perspective of a solution, all that matters is how many components from each type use a particular page assignment.
The precise distribution of components to page assignments is irrelevant.
Thus, we can encode the problem of finding an admissible layout as an integer linear program (ILP) with a bounded number of variables -- essentially one for each type and their possible page assignment.
Fixed-parameter tractability then follows by checking feasibility for the ILP in \FPT-time:
\begin{proposition}[See, e.g., \cite{Len.IPF.1983,RR.SFC.2023}]
	\label{prop:ilp-fpt}
	Let $\mathcal{P}$ be an integer linear program in $\eta$ variables.
	Then, it can be solved in time $f(\eta) \cdot \Size{\mathcal{P}}^{\BigO{1}}$ for some computable function $f$.
\end{proposition}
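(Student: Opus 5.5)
The proposition is a classical result that the paper cites rather than proves, so the plan is to rebuild the standard Lenstra--Kannan argument rather than derive anything from earlier results of the paper. First I would reduce to pure feasibility. An ILP with an objective can be handled by binary search on the objective value, adding it as one more linear constraint. Standard bounds (Papadimitriou-type small-solution bounds) show that if a feasible or optimal integral point exists, one exists with encoding length polynomial in $\Size{\mathcal{P}}$ and $\eta$. So the binary search needs only polynomially many feasibility calls. It then suffices to decide, for $A \in \mathbb{Z}^{k \times \eta}$ and $\overline{b} \in \mathbb{Z}^k$, whether the polytope $K = \{x \mid Ax \le \overline{b}\}$ contains an integral point. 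Using those same bounds we may also add box constraints, so that $K$ is bounded.

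The core is a recursion on the dimension $\eta$, based on the \emph{flatness theorem}. Either $K$ contains an integral point, or there is a nonzero integral direction $w$ with $\max_{x\in K} w^\top x - \min_{x \in K} w^\top x \le F(\eta)$ for some function $F$ depending only on $\eta$. Algorithmically, I would proceed in three steps.
\begin{enumerate}
\item Compute an ellipsoid $E$ with $E \subseteq K \subseteq \eta^{\BigO{1}} E$ (Löwner--John approximation via the ellipsoid method), and apply the linear map that turns $E$ into a ball.
\item Run LLL basis reduction on the transformed lattice. Either the ball provably contains a lattice point, which is then found by rounding with respect to the reduced basis, or the reduced basis yields a direction $w$ in which $K$ has width at most $2^{\BigO{\eta^2}}$.
\item In the second case, branch over the at most $2^{\BigO{\eta^2}}+1$ integer values $w^\top x = \beta$. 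Each branch is an integer feasibility problem in one dimension fewer, which we obtain by eliminating a variable through a unimodular change of coordinates (Hermite normal form).
\end{enumerate}
The recursion depth is $\eta$, so the total number of leaves is $f(\eta)$. Every node performs polynomial work in $\Size{\mathcal{P}}$, which gives $f(\eta)\cdot\Size{\mathcal{P}}^{\BigO{1}}$.

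The main obstacle is making the recursion's width bound depend on $\eta$ alone while keeping every step polynomial in the encoding length. I would keep this under control in two ways. First, the Löwner--John rounding factor is a function of $\eta$ only. Second, the LLL approximation factor $2^{\BigO{\eta}}$ multiplies the width bound only once per level. I also need the bit sizes of the transformed subproblems to stay polynomially bounded across levels, which Hermite normal form and LLL guarantee. Kannan's refinement gives $f(\eta)=\eta^{\BigO{\eta}}$, but any computable $f$ suffices here.
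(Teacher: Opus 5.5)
The paper gives no proof of this proposition; it only cites Lenstra (and Reis--Rothvoss for the currently best $f$). Your outline (rounding, lattice reduction, flat direction, branching over parallel hyperplanes, recursion on the dimension) is exactly the cited route. As an argument for the stated bound, however, it has two holes.

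First, step~1 presupposes that $K$ is full-dimensional. If $K$ has empty interior, there is no ellipsoid $E \subseteq K$ that a linear map can turn into a ball. This is not a corner case. The ILP in the paper contains the equality constraints (SUM), so its feasible region never has interior points. Moreover, the slices $K \cap \{x \mid w^\top x = \beta\}$ produced by your branching can again be degenerate. At every node you need the standard extra step: compute the affine hull of $K$ by linear programming and decide via Hermite normal form whether it contains an integral point at all. If it does, write $\mathbb{Z}^\eta \cap \mathrm{aff}(K) = x_0 + M\mathbb{Z}^{\eta'}$ and continue with the full-dimensional polytope in the $\eta'$ new coordinates.

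Second, and more seriously, you assert that Hermite normal form and LLL keep the bit sizes polynomially bounded across levels, but this does not give the claimed running time. Each of these procedures bounds the size of its output polynomially in the size of its input. Composing $\eta$ such bounds along a root-to-leaf path only guarantees subproblems of size $\Size{\mathcal{P}}^{c^{\eta}}$, hence a running time of the form $\Size{\mathcal{P}}^{g(\eta)}$. That is polynomial for every fixed $\eta$, but it is an \XP\ bound. The proposition, and its use in \Cref{thm:stack-layouts-pages-colors-components-size}, needs an exponent independent of $\eta$.

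What you must show is that one level multiplies the encoding length by at most a factor depending on $\eta$ alone. For instance:
\begin{itemize}
\item A full-dimensional polytope of facet complexity $\phi$ contains a ball of radius $r \geq 2^{-\mathrm{poly}(\eta)\cdot\phi}$.
\item Any direction $w$ of width at most $F(\eta)$ therefore satisfies $2r\|w\| \leq F(\eta)$.
\item Hence the encoding lengths of $w$, of a suitably chosen unimodular coordinate change built from it, and of the next subproblem are all at most $\mathrm{poly}(\eta)\cdot\phi$ plus a term depending only on $\eta$.
\item Iterating over the at most $\eta$ levels keeps every number at $g(\eta)\cdot\Size{\mathcal{P}}$ bits.
\end{itemize}
This bookkeeping is what the cited analyses carry out. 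Without it, your sketch yields polynomiality in fixed dimension, but not the $f(\eta)\cdot\Size{\mathcal{P}}^{\BigO{1}}$ bound.
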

\onlyLong{
In the following, we formalize each of the above-hinted steps.}

\subparagraph*{Component Types and Their Embeddings.}
In the following, for an edge $e \in E$ and connected component $C \in \mathcal{C}(\stackConflict)$, we write $e \in V(C)$ to denote that $e$ is represented by a vertex of $C$.
We start by defining the component types.

\begin{definition}
    \label{def:stack-layouts-colors-components-size-types}
    Two connected components $C_i, C_j \in \mathcal{C}(\stackConflict)$ have the same \emph{component type}, denoted $C_i \sim C_j$, if and only if there exists an isomorphism $\xi\colon V(C_i) \to V(C_j)$ from $C_i$ to~$C_j$ such that $\delta(e) = \delta(\xi(e))$ for every $e \in V(C_i)$.
\end{definition}

Note that $\sim$ is an equivalence relation whose number of classes is bounded:

\begin{restatable}\restateref{lem:stack-layouts-colors-components-size-types}{lemma}{lemmaStackColorsComponentsTypes}
    \label{lem:stack-layouts-colors-components-size-types}
    There are $\BigO{\conCompSize \cdot 2^{{\conCompSize}^2} \cdot \chi^{\conCompSize}}$ equivalence classes in $\mathcal{C}(\stackConflict)/{\sim}$.
    We can compute the partition of $\mathcal{C}(\stackConflict)$ into $\sim$ in $\BigO{\conCompSize^3 \cdot 2^{{\conCompSize}^2 + \conCompSize \cdot \log \conCompSize} \cdot \chi^{\conCompSize} \cdot  m}$ time.
\end{restatable}
\begin{prooflater}{plemmaStackColorsComponentsTypes}
    Each connected component $C \in \mathcal{C}(\stackConflict)$ has at most $\conCompSize$ vertices.
    Moreover, every vertex $e \in V(C)$ has one of $\chi$ colors.
    For every $i \in [\conCompSize]$, there are $2^{\binom{i}{2}} \cdot \chi^{i}$ different vertex-colored graphs (on the vertex set $[i]$).
    Thus, for up to $\conCompSize$ vertices, there are 
    \begin{align*}
        \sum_{i = 1}^{\conCompSize} 2^{\binom{i}{2}} \cdot {\chi}^i \leq \conCompSize \cdot 2^{\binom{\conCompSize}{2}} \cdot \chi^{\conCompSize} \in \BigO{\conCompSize \cdot 2^{{\conCompSize}^2} \cdot \chi^{\conCompSize}}
    \end{align*}
    different vertex-colored graphs.
    Every equivalence class in $\mathcal{C}(\stackConflict)/{\sim}$ contains a connected component which is isomorphic to one of these graphs.
    Consequently, we can upper bound the number of equivalence classes in $\mathcal{C}(\stackConflict)/{\sim}$ by $\BigO{\conCompSize \cdot 2^{{\conCompSize}^2} \cdot \chi^{\conCompSize}}$.

    For the running time, we first observe that $\stackConflict$ contains $m$ vertices, one for each edge in $G$, and that the number of connected components is, therefore, also at most $m$.
    For each connected component $C \in \mathcal{C}(\stackConflict)$, we can iterate over all equivalence classes (found so far).
    For each potential equivalence class, we can branch over all $\BigO{\conCompSize!} \in \BigO{2^{\conCompSize\cdot\log\conCompSize}}$ possible bijections $\xi$ and check in $\BigO{{\conCompSize}^2}$ time whether one of them is a color-preserving isomorphism or whether $C$ induces a ``new'' equivalence class in $\mathcal{C}(\stackConflict)/{\sim}$.
    Combining all, the running time bound follows readily.
\end{prooflater}

Since every connected component $C \in \mathcal{C}(\stackConflict)$ has at most $\conCompSize$ vertices, there are at most $\ell^{\conCompSize}$ possible page assignments $\sigma$ for the edges $e \in V(C)$, i.e., vertex-colorings of $C$.
However, we must exclude those that introduce crossings on a page, i.e., only allow proper colorings of $C$:

\begin{definition}
    \label{def:stack-layouts-colors-components-size-embeddings}
    Let $[C]$ be an equivalence class of $\sim$.
    The \emph{valid page assignments} $\Sigma([C])$ is the set of all page assignments $\sigma\colon V(C) \to [\ell]$ such that there are no two edges $ab, cd \in C$ with $\sigma(ab) = \sigma(cd)$ such that $a \prec c \prec b\prec d$.
\end{definition}

We can efficiently compute the valid page assignments $\Sigma([C])$ of an equivalence class $[C]$:
\begin{observation}
    \label{obs:stack-layouts-colors-components-size-embeddings-number-time}
    Let $[C]$ be an equivalence class of $\sim$.
    It holds $\Size{\Sigma([C])} \in \BigO{\ell^{\conCompSize}}$.
    We can compute the set $\Sigma([C])$ in $\BigO{\ell^{\conCompSize} \cdot \conCompSize^2}$ time.
\end{observation}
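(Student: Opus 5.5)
The observation has two parts, a counting bound and an algorithm, and I will handle them by choosing a representative and enumerating.

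For the bound, fix an arbitrary representative $C \in [C]$. Every valid page assignment in $\Sigma([C])$ is in particular a function $V(C) \to [\ell]$. Since $\Size{V(C)} \le \conCompSize$, there are at most $\ell^{\Size{V(C)}} \le \ell^{\conCompSize}$ such functions, which gives $\Size{\Sigma([C])} \in \BigO{\ell^{\conCompSize}}$.

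One subtlety has to be settled first: \Cref{def:stack-layouts-colors-components-size-embeddings} refers to crossings under $\prec$, while the equivalence class only records the colored conflict graph. I would argue that validity does not depend on the chosen representative. Two edges of $C$ cross under $\prec$ exactly when they are adjacent in $\stackConflict$. Hence an assignment $\sigma$ is valid if and only if it is a proper $\ell$-coloring of $C$.

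A color-preserving isomorphism $\xi\colon C \to C'$ maps proper colorings bijectively to proper colorings via $\sigma \mapsto \sigma \circ \xi^{-1}$. So $\Sigma([C])$ is well defined up to this identification, and we may compute it on any single representative.

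For the algorithm, enumerate all $\ell^{\Size{V(C)}}$ functions $\sigma\colon V(C) \to [\ell]$, for instance by counting in base $\ell$. For each function, test every pair of vertices $e, e'$ of $C$: reject $\sigma$ if $ee' \in E(C)$ and $\sigma(e) = \sigma(e')$, which is the same as checking the crossing condition $a \prec c \prec b \prec d$ directly on the endpoints.

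Testing one function costs $\BigO{\conCompSize^2}$. This assumes an adjacency matrix of $C$ is available, and it can be built once in $\BigO{\conCompSize^2}$ time from the conflict graph, or equivalently by comparing endpoint positions in constant time per pair. Writing down each accepted $\sigma$ costs only $\BigO{\conCompSize}$. The total running time is therefore $\BigO{\ell^{\conCompSize}\cdot\conCompSize^2}$.

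I expect no step to be a real obstacle. The only point needing care is the representative-independence argument above, and it follows directly from the fact that crossings are exactly the edges of the conflict graph.
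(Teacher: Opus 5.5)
Your proposal is correct and follows the same reasoning the paper relies on: there are at most $\ell^{\conCompSize}$ functions $V(C)\to[\ell]$, and each is tested for crossings pairwise in $\BigO{\conCompSize^2}$ time. Your argument that validity depends only on the colored conflict graph, so $\Sigma([C])$ does not depend on the chosen representative, is sound; the paper leaves this implicit in its phrase ``only allow proper colorings of $C$''.
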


For an equivalence class $[C]$ of $\sim$, a valid page assignment $\sigma \in \Sigma([C])$, and a color $c \in [\chi]$, we let $\#_p([C], \sigma, c)$ denote the number of edges of color $c$ assigned to page $p \in [\ell]$ in the page assignment $\sigma$.
We are now ready to state our ILP.

\subparagraph*{Using an ILP to Find a Solution.}
We now define the ILP $\mathcal{P}(\instance)$ for \instance.
For every equivalence class $[C]$ of $\sim$, we define $n_{[C]} \coloneqq \Size{[C]} = \Size{\{C' \in \mathcal{C}(\stackConflict) \mid C' \sim C\}}$ to be the number of connected components $C'$ of $\stackConflict$ of component type $[C]$.
For every equivalence class $[C]$ of $\sim$ and every page assignment $\sigma \in \Sigma([C])$, we introduce an integer variable $x_{[C],\sigma}$ that captures the number of connected components $C' \in [C]$ %
whose vertices, i.e., edges of $G$, are assigned to pages according to $\sigma$ in a solution.
Observe that every variable assignment with $0 \leq x_{[C],\sigma} \leq n_{[C]}$ and $\sum_{\sigma \in \Sigma([C])} x_{[C],\sigma} = n_{[C]}$ for every $[C] \in \mathcal{C}(\stackConflict)/{\sim}$ and $\sigma \in \Sigma([C])$ can be translated into a crossing-free linear layout of $G$.
To capture this, we introduce the following constraints:
\begin{align}
    0 \leq x_{[C],\sigma} \leq n_{[C]}&\quad\quad \forall [C] \in \mathcal{C}(\stackConflict)/{\sim},\ \sigma\in\Sigma([C])\tag{BND}\label{eq:stack-layouts-colors-components-size-bound}\\
    \sum_{\sigma \in \Sigma([C])} x_{[C],\sigma} = n_{[C]}&\quad\quad \forall [C] \in \mathcal{C}(\stackConflict)/{\sim}\tag{SUM}\label{eq:stack-layouts-colors-components-size-sum}
\end{align}
We need additional constraints to ensure that a layout also respects our budget constraints.
To this end, recall that we let $\#_p([C], \sigma, c)$ denote the number of edges $e \in V(C)$ of color $c \in [\chi]$ assigned to page $p \in [\ell]$ in the page assignment $\sigma \in \Sigma([C])$.
Note that this value is a constant that can be precomputed.
To get the overall number of edges of color $c$ on some page $p$, we can sum this value up over all valid page assignments for all equivalence classes $[C] \in \mathcal{C}(\stackConflict)/{\sim}$,
which should not exceed the budget constraint for color $c$ on page $p$:

\begin{align}
    \sum_{[C] \in \mathcal{C}(\stackConflict)/{\sim}}\ \sum_{\sigma\in\Sigma([C])}  \#_p([C], \sigma, c) \cdot x_{[C],\sigma} \leq b_p(c)\quad\quad  
    \forall c \in [\chi],\ p \in [\ell]\tag{BGT}\label{eq:stack-layouts-colors-components-size-budget}
\end{align}

\onlyLong{
This concludes the definition of the ILP.
Next, we show that it is feasible if and only if our instance of \capacitatedStack admits a solution.
We use this result to complete our \FPT-algorithm for \capacitatedStack parameterized by $\ell+\conCompSize + \chi$.
}

\begin{restatable}\restateref{thm:stack-layouts-pages-colors-components-size}{theorem}{theoremStackColorsComponents}
	\label{thm:stack-layouts-pages-colors-components-size}
    \capacitatedStackLong is \FPT\ in the number of pages $\ell$, number of colors $\chi$, and the maximum size $\conCompSize$ of a connected component of $\stackConflict[G, \prec]$.
\end{restatable}
\begin{prooflater}{ptheoremStackColorsComponents}
    Let $\instance = \instanceLongQueue$ be an instance of \capacitatedStackLong.
    We compute the conflict graph $\stackConflict$ and the partition of its connected components into the equivalence classes $\mathcal{C}(\stackConflict)/{\sim}$ in $\BigO{m^2 + \conCompSize^3 \cdot 2^{{\conCompSize}^2 + \conCompSize \cdot \log \conCompSize} \cdot \chi^{\conCompSize} \cdot m}$ time; recall also \Cref{lem:stack-layouts-colors-components-size-types}.
    Next, we construct the above-introduced integer linear program $\mathcal{P}(\instance)$.
    Observe that $\mathcal{P}(\instance)$ has one variable for every valid page assignment of every equivalence class of $\sim$.
    Thus, by \Cref{lem:stack-layouts-colors-components-size-types,obs:stack-layouts-colors-components-size-embeddings-number-time}, we can bound the number of variables in $\mathcal{P}(\instance)$ by $\BigO{\conCompSize \cdot 2^{{\conCompSize}^2} \cdot \chi^{\conCompSize} \cdot \ell^{\conCompSize}}$.
    We can construct $\mathcal{P}(\instance)$ in $\BigO{\conCompSize \cdot 2^{{\conCompSize}^2} \cdot \chi^{\conCompSize} \cdot \ell^{\conCompSize}\cdot \conCompSize^2 + \ell\cdot \chi}$ time.
    By \Cref{prop:ilp-fpt}, we can check in $g(\conCompSize \cdot 2^{{\conCompSize}^2} \cdot \chi^{\conCompSize} \cdot \ell^{\conCompSize}) \cdot \Size{\mathcal{P}(\instance)}^{\BigO{1}}$ time whether $\mathcal{P}(\instance)$ admits a solution for some computable function $g$.
    Since the size of $\mathcal{P}(\instance)$ can be bounded by $h(\ell, \conCompSize, \chi) \cdot n^{\BigO{1}}$ for some computable function $h$, we can bound the overall running time by $f(\ell, \conCompSize, \chi) \cdot n^{\BigO{1}}$ for some computable function $f$.
    To establish the statement, it remains to connect solutions for $\mathcal{P}(\instance)$ with those for \instance.
    
    \begin{claim}
    	\label{clm:stack-layouts-colors-components-size}
    	\instance is a positive instance if and only if  $\Sigma([C]) \neq \emptyset$ for every equivalence class $[C] \in \mathcal{C}(\stackConflict)/{\sim}$ and $\mathcal{P}(\instance)$ is feasible.
    \end{claim}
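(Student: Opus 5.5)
We prove the two directions of \Cref{clm:stack-layouts-colors-components-size} separately. Throughout, we use the observation that two edges of $G$ can only cross if they lie in the same connected component of $\stackConflict$. Consequently, a page assignment of $G$ is crossing-free if and only if its restriction to every component is crossing-free.

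\emph{($\Rightarrow$).} Let \linL be an admissible solution to \instance.
\begin{enumerate}
\item For each component $C' \in [C]$, fix a color-preserving isomorphism $\xi_{C'}\colon V(C) \to V(C')$ to the chosen representative $C$ of the class.
\item Define the pulled-back assignment $\sigma_{C'} \coloneqq \sigma \circ \xi_{C'}$ on $V(C)$.
\item Because $\xi_{C'}$ preserves adjacency in $\stackConflict$, i.e., crossings, and $\sigma$ has no monochromatic crossing inside $C'$, we get $\sigma_{C'} \in \Sigma([C])$. In particular, $\Sigma([C]) \neq \emptyset$.
\item Set $x_{[C],\tau} \coloneqq \Size{\{C' \in [C] \mid \sigma_{C'} = \tau\}}$ for every $\tau \in \Sigma([C])$.
\end{enumerate}
Constraints \eqref{eq:stack-layouts-colors-components-size-bound} and \eqref{eq:stack-layouts-colors-components-size-sum} hold because every component of the class is counted exactly once. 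For \eqref{eq:stack-layouts-colors-components-size-budget}, note that $\xi_{C'}$ preserves colors. Hence the number of color-$c$ edges of $C'$ on page $p$ under $\sigma$ equals $\#_p([C],\sigma_{C'},c)$. Summing over all components gives the left-hand side of \eqref{eq:stack-layouts-colors-components-size-budget}, which is exactly the number of color-$c$ edges on page $p$ in \linL. Admissibility of \linL bounds this by $b_p(c)$.

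\emph{($\Leftarrow$).} Let $(x_{[C],\sigma})$ be a feasible solution of $\mathcal{P}(\instance)$.
\begin{enumerate}
\item For each class $[C]$, constraint \eqref{eq:stack-layouts-colors-components-size-sum} and nonnegativity let us partition the $n_{[C]}$ components of $[C]$ arbitrarily into groups, with exactly $x_{[C],\tau}$ components in the group of $\tau$, for each $\tau \in \Sigma([C])$.
\item For each component $C'$ in the group of $\tau$, set $\sigma(e) \coloneqq \tau(\xi_{C'}^{-1}(e))$ for all $e \in V(C')$.
\end{enumerate}
This defines $\sigma$ on all edges of $G$, since every edge belongs to exactly one component. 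We need the hypothesis $\Sigma([C]) \neq \emptyset$ only because a class with empty $\Sigma$ would make \eqref{eq:stack-layouts-colors-components-size-sum} infeasible anyway. We state it explicitly for clarity, since $n_{[C]} \geq 1$ for every class.

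It remains to check that $\sigma$ is a valid solution.
\begin{itemize}
\item \emph{No crossings.} A crossing on a page would have to occur inside a single component $C'$. It would then correspond, via $\xi_{C'}$, to a crossing pair in $C$ with equal pages under $\tau$. This contradicts $\tau \in \Sigma([C])$.
\item \emph{Budgets.} The same counting identity as in the forward direction shows that the number of color-$c$ edges on page $p$ equals the left-hand side of \eqref{eq:stack-layouts-colors-components-size-budget}. This is at most $b_p(c)$.
\end{itemize}

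The only subtle point is the crossing check. Validity is defined via $\prec$ on the representative $C$, but it must transfer to $C'$. The correspondence that makes this work is that $\xi_{C'}$ is an isomorphism of conflict graphs, and crossing pairs are exactly the edges of $\stackConflict$. So we argue via adjacency rather than via the positions of endpoints. Equivalently, $\Sigma([C])$ is the set of proper $\ell$-colorings of $C$, which is an isomorphism invariant. With the claim established, the theorem follows from the running-time analysis above.
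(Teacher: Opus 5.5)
Your proof is correct and follows the paper's argument. In the forward direction you count, for each component type, how many components use each valid page assignment; in the backward direction you hand out the valid page assignments to the components of each type according to the variable values, using that crossings only occur inside a single component of $\stackConflict$.

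You are more explicit than the paper in two places. First, you pull back along the color-preserving isomorphisms and transfer validity via adjacency in the conflict graph. Second, you note that $\Sigma([C]) \neq \emptyset$ is already forced by Constraint (SUM), since $n_{[C]} \geq 1$; the paper instead simply calls the case $\Sigma([C]) = \emptyset$ trivial.
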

    \begin{claimproof}
    	For the sake of the proof, assume that $\Sigma([C]) \neq \emptyset$ holds for every equivalence class $[C] \in \mathcal{C}(\stackConflict)/{\sim}$.
    	Otherwise, the equivalence is trivial.
    	We now argue the remaining part of the if and only if.
    	
    	\proofsubparagraph{($\boldsymbol{\Rightarrow}$)}
    	Assume that $\instance$ is a positive instance and consider a solution $\linL$.
    	Since $\linL$ is a solution, it is in particular a valid stack layout.
    	Thus, every connected component $C \in\mathcal{C}(\stackConflict)$ uses %
        one of its valid page assignments $\sigma' \in \Sigma([C])$.
    	We now consider every equivalence class $[C]$ of $\sim$ and every valid page assignment $\sigma' \in \Sigma([C])$.
    	In line with the intended semantics of our variables, we set $x_{[C],\sigma'}$ equal to the number of connected components from $\stackConflict \cap [C]$ that use the page assignment $\sigma'$.
    	Since these are at least zero and at most $n_{[C]}$, this variable assignment fulfills Constraint~\eqref{eq:stack-layouts-colors-components-size-bound}.
        Additionally, since every connected component $C' \in [C]$ uses precisely one of its valid page assignments, the sum of these variables must equal $n_{[C]}$, i.e., we also fulfill Constraint~\eqref{eq:stack-layouts-colors-components-size-sum}.
    	Moreover, since $\linL$ is an admissible layout, the number of edges of color $c$ on every page $p \in [\ell]$ is at most $b_p(c)$.
    	Thus, the variable assignment also satisfies Constraint~\eqref{eq:stack-layouts-colors-components-size-budget} and hence witnesses that $\mathcal{P}(\instance)$ is feasible.
    	
    	\proofsubparagraph{($\boldsymbol{\Leftarrow}$)}
    	Assume that $\mathcal{P}(\instance)$ is feasible (and that $\Sigma([C]) \neq \emptyset$ holds for every equivalence class $[C] \in \mathcal{C}(\stackConflict)/{\sim}$) and consider a variable assignment that satisfies Constraints~\eqref{eq:stack-layouts-colors-components-size-bound},~\eqref{eq:stack-layouts-colors-components-size-sum}, and~\eqref{eq:stack-layouts-colors-components-size-budget}.
    	We now construct a solution $\linL$ as follows.
    	Consider an equivalence class $[C]$ of $\mathcal{C}(\stackConflict)/{\sim}$ and all the connected components $C_1, \ldots, C_{n_{[C]}} \in \stackConflict \cap [C]$ of the same component type (ordered according to their occurrence along $\prec$).
        Moreover, consider an arbitrary order of the valid page assignments $\sigma_1, \ldots, \sigma_k$ of $\Sigma([C])$; note that $k \geq 1$ holds by assumption.
        For the first $x_{[C], \sigma_1}$ connected components $C_1, \ldots, C_{x_{[C], \sigma_1}}$, we set $\sigma(e) = \sigma_1(\xi_i(e))$ (for the color-preserving isomorphism $\xi_i\colon C_i \to C$) for every edge $e \in V(C_i)$, $1 \leq i \leq x_{[C],\sigma_1}$.
        For the next $x_{[C], \sigma_2}$ connected components, we set $\sigma(e) = \sigma_2(\xi_i(e))$ for every edge $e \in V(C_i)$, $x_{[C], \sigma_1} + 1 \leq i \leq x_{[C], \sigma_1} + x_{[C], \sigma_2}$.
        We continue this process until we have considered all $k$ valid page assignments of $[C]$.
    	Note that this assignment is well-defined as we satisfy Constraints~\eqref{eq:stack-layouts-colors-components-size-bound} and~\eqref{eq:stack-layouts-colors-components-size-sum}.
    	We repeat this step for every equivalence class $[C]$ of $\mathcal{C}(\stackConflict)/{\sim}$ to eventually obtain $\linL$.
    	To see that $\linL$ is a solution, observe, on the one hand, that it is a valid $\ell$-page stack layout since %
        we only use valid page assignments and every equivalence class $[C] \in \mathcal{C}(\stackConflict)/{\sim}$ admits at least one such page assignment, i.e., $\Sigma([C]) \neq \emptyset$.
    	Moreover, two edges from different connected components cannot cross by the definition of connected components in $\stackConflict$.
    	On the other hand, $\linL$ is admissible as the variable assignment satisfies Constraint~\eqref{eq:stack-layouts-colors-components-size-budget}.
    	Thus, $\linL$ is a solution.
    \end{claimproof}
    
    The theorem follows by combining \Cref{clm:stack-layouts-colors-components-size} with the running time bound.
\end{prooflater}
\onlyLong{\Cref{thm:stack-layouts-pages-colors-components-size} immediately implies:}
\begin{corollary}
	\label{cor:stack-layouts-colors-components-size}
    \capacitatedStackLongTwoPages is \FPT\ in the number of colors $\chi$ and the maximum size $\conCompSize$ of a connected component of $\stackConflict$.
\end{corollary}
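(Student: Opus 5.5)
This follows by specializing \Cref{thm:stack-layouts-pages-colors-components-size} to $\ell = 2$. Every instance $\instanceLongStack$ of \capacitatedStackLongTwoPages is an instance of \capacitatedStackLong with $\ell = 2$, and the two problems ask the same question on it. So I would run the algorithm of \Cref{thm:stack-layouts-pages-colors-components-size} unchanged.

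That algorithm works in three steps. First, it computes $\stackConflict$ and partitions its components into the classes of $\sim$, using \Cref{lem:stack-layouts-colors-components-size-types}. Second, for each class it enumerates the at most $2^{\conCompSize}$ valid page assignments (\Cref{obs:stack-layouts-colors-components-size-embeddings-number-time}). Third, it solves the ILP $\mathcal{P}(\instance)$ using \Cref{prop:ilp-fpt}. Its running time is $f(\ell,\chi,\conCompSize)\cdot n^{\BigO{1}}$. With $\ell$ fixed to $2$, this becomes $f(2,\chi,\conCompSize)\cdot n^{\BigO{1}} = f'(\chi,\conCompSize)\cdot n^{\BigO{1}}$, where $f'$ is computable. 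That is an \FPT\ bound in $\chi + \conCompSize$.

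There is no real obstacle here. The only point to check is that the parameter function of the theorem stays computable when one argument is fixed to a constant, and it does.

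Alternatively, one could avoid the ILP. Each component is bipartite and has exactly two valid assignments, so one could combine \Cref{lem:stack-layouts-colors-xp}-style reasoning with types. That is unnecessary, since the corollary is immediate from \Cref{thm:stack-layouts-pages-colors-components-size}.
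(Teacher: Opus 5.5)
Your proposal is correct and matches the paper: the corollary is obtained directly by instantiating \Cref{thm:stack-layouts-pages-colors-components-size} with $\ell = 2$. This gives a running time of $f(2,\chi,\conCompSize)\cdot n^{\BigO{1}}$, which is \FPT\ in $\chi + \conCompSize$.
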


\section{Fixed Order Queue Layouts}
\label{sec:queue-layouts}
In this section, we investigate queue layouts with fixed spine order $\prec$.
It is well-known that an $\ell$-page queue layout with spine order $\prec$ exists if and only if $\prec$ does not induce an $(\ell + 1)$-rainbow, i.e., a set of $\ell + 1$ pairwise nesting edges~\cite{HR.LOG.1992}.
Heath and Rosenberg~\cite{HR.LOG.1992} provided an efficient algorithm to construct such a queue layout (if it exists).
However, in their algorithm they employ a greedy-like strategy and assign the edges to the first page in which they do not nest any other edge.
The presence of our budgets invalidates such a strategy, and we show that deciding existence of a solution is already \NP-complete for monochromatic graphs, i.e., where all edges have the same color, when $\ell$ is part of the input.
Before we establish said result, we analyze the complexity of our problem %
when $\ell \in \{2,3\}$.

\subsection{Instances with Two or Three Pages}
\label{sec:queue-layouts-two-pages}
We first consider \capacitatedQueue when restricted to instances with two pages, i.e., $\ell = 2$.
In particular, we show that the problem of finding an admissible 2-page queue layout is \NP-complete, even if every connected component in $\queueConflict$ has constant size.
Afterwards, we present with \Cref{thm:queue-layouts-two-pages-components-fpt} an \FPT-algorithm parameterized by the number of connected components of the conflict graph \queueConflict\ for the case $\ell = 2$.
Finally, we show in \Cref{thm:queue-layouts-three-pages-connected} that the aforementioned \FPT-algorithm cannot be lifted to instances with $\ell = 3$.

\subparagraph*{Hardness for Two Pages.}
We start by establishing hardness for two pages even if every connected component in $\queueConflict$ has at most five vertices.
Observe that \Cref{thm:stack-layouts-components-size} is the counterpart to this result for \capacitatedStack.
Moreover, the overall proof idea and the gadgets that we employ are very similar; the main part that differs between the two reductions is the linear order $\prec$.
\shortLong{In the interest of space, we only provide an illustration of the gadgets (see \Cref{fig:queue-layout-component-size-gadgets}) but defer all formal arguments to establish the following result to \Cref{app:queue-layouts-two-pages}:}{For the sake of completeness, we provide in the following all formal arguments to establish the claimed hardness but focus on the part that differs between the two reductions.}

\begin{figure}
	\centering
	\includegraphics[page=1]{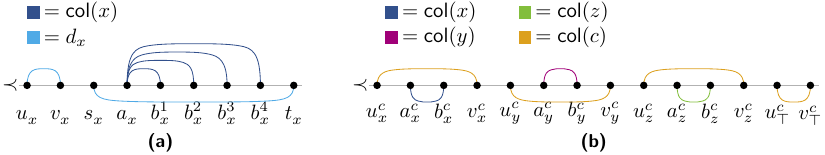}
	\caption{Gadgets representing \textbf{\textsf{(a)}} the variable $x$ in its true state and \textbf{\textsf{(b)}} the clause $c = (x \lor y \lor z)$ where $x$ and $z$ are set to true and $y$ to false in the reduction for \Cref{thm:queue-layouts-components-size}.}
	\label{fig:queue-layout-component-size-gadgets}
\end{figure}
\begin{statelater}{queueLayoutsComponentsSizeHardnessDetails}
For the following discussion, let $\varphi$ be an instance of the \NP-complete problem \probname{Monotone Not All Equal 3Sat-E4}~\cite{DD.shv.2020}; see again \Cref{sec:stack-layouts-components-size} for its definition.

\subparagraph*{Variable Gadget.}
Let $x \in X$ be a variable of $\varphi$.
As in the construction for \Cref{thm:stack-layouts-components-size}, the variable gadget for $x$ consists of a selecting edge $s_xt_x$, a dummy edge $u_xv_x$, and four variable edges $a_xb_x^i$, $i = 1,2,3,4$.
The variable edges receive the color $\col{x}$, the remaining two edges receive the color $d_x \neq \col{x}$.
While the budget for these colors is the same as in \Cref{sec:stack-layouts-components-size}, i.e., $b_1(d_x) = b_2(d_x) = 1$ and $b_1(\col{x}) = b_2(\col{x}) = 4$, the order of the edges' endpoints differs.
More concretely, we use the order $u_x \prec v_x \prec s_x \prec a_x \prec b_x^1 \prec b_x^2 \prec b_x^3 \prec b_x^4 \prec t_x$; see also \Cref{fig:queue-layout-component-size-gadgets}a.
Due to the budget and the placement of the endpoints (note that the selecting edge nests all variable edges), we can again make the following observation.
\begin{observation}
	\label{obs:queue-layouts-components-size-variable}
	In every admissible 2-page queue layout \linL of the variable gadget for a variable $x \in X$ we have $\sigma(s_xt_x) \neq \sigma(u_xv_x) = \sigma(a_xb_x^i)$ for every $i = 1,2,3,4$.
\end{observation}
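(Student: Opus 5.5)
The plan is to read the observation off the nesting structure of the gadget and then close with a counting argument on color $d_x$. First I will determine which pairs of gadget edges nest under $u_x \prec v_x \prec s_x \prec a_x \prec b_x^1 \prec b_x^2 \prec b_x^3 \prec b_x^4 \prec t_x$.

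The selecting edge $s_xt_x$ satisfies $s_x \prec a_x \prec b_x^i \prec t_x$. Hence it nests every variable edge $a_xb_x^i$, $i = 1,2,3,4$, and so it conflicts with all four of them in $\queueConflict$.

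The variable edges share the endpoint $a_x$. Since the definition of a queue layout only forbids nesting of independent edges, these four edges are pairwise non-conflicting.

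The dummy edge $u_xv_x$ lies entirely to the left of every other gadget edge. It is therefore disjoint from them and not in conflict with any edge of the gadget.

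Given this structure, any 2-page queue layout must place every $a_xb_x^i$ on the page opposite to $\sigma(s_xt_x)$. Consequently, all four variable edges share a page, namely the one different from $\sigma(s_xt_x)$.

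It remains to pin down $\sigma(u_xv_x)$, which is unconstrained by conflicts. Here I use the budget $b_1(d_x) = b_2(d_x) = 1$: the two edges $s_xt_x$ and $u_xv_x$ both have color $d_x$, so they cannot share a page. This gives $\sigma(u_xv_x) \neq \sigma(s_xt_x)$. With only two pages, $\sigma(u_xv_x)$ then equals the page holding the variable edges, which yields $\sigma(s_xt_x) \neq \sigma(u_xv_x) = \sigma(a_xb_x^i)$ for all $i$.

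The only step needing care is the claim that the variable edges, which share $a_x$, do not count as nesting. This follows from the independence requirement in the definition, and nothing in the argument depends on it anyway, because each variable edge is forced individually by $s_xt_x$. The remainder of the argument is immediate, mirroring \Cref{obs:stack-layouts-components-size-variable}.
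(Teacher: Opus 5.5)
Your proof is correct and takes the same route as the paper. The paper justifies the observation only by noting that $s_xt_x$ nests all four variable edges, which forces them onto the other page, and that the budget $b_1(d_x)=b_2(d_x)=1$ on the two $d_x$-colored edges forces $u_xv_x$ onto that same page.
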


\subparagraph*{Clause Gadget.}
Let $c = (x \lor y \lor z) \in C$ be a clause of $\varphi$ for some $x,y,z \in X$; recall that $\varphi$ is monotone, i.e., every clause contains only positive literals.
Similar to the variable gadget, we also use for the clause gadget the same set of colored edges as in \Cref{sec:stack-layouts-components-size} but place their endpoints differently on the spine $\prec$.
More formally, the clause gadget has three literal edges $a_{x}^cb_{x}^c$, $a_{y}^cb_{y}^c$, and~$a_z^cb_{z}^c$, each of which inherits the color of the respective variable gadget, i.e., $\delta({a_{q}^cb_{q}^c}) = \col{q}$ for $q \in \{x, y, z\}$.
Moreover, the gadget has four validator edges $u_x^cv_x^c$, $u_{y}^cv_{y}^c$, $u_{z}^cv_{z}^c$, and $u_{\top}^cv_{\top}^c$ which have the color $\col{c}$.
We set the budget $b_1(\col{c}) = b_2(\col{c}) = 2$ and ensure that each literal edge is nested precisely by the corresponding validator edge, i.e., $u_{x}^c \prec a_{x}^c \prec b_{x}^c \prec v_{x}^c \prec u_{y}^c \prec a_{y}^c \prec b_{y}^c \prec v_{y}^c \prec u_{z}^c \prec a_{z}^c \prec b_{z}^c \prec v_{z}^c \prec u_{\top}^c \prec v_{\top}^c$; see also \Cref{fig:queue-layout-component-size-gadgets}b.
We now show the following analogue to \Cref{lem:stack-layouts-components-size-clause}:

\begin{lemma}
	\label{lem:queue-layouts-components-size-clause}
	Let \linL be an admissible 2-page queue layout of the constructed instance that contains the clause gadget for a clause $c \in C$ with $c = (x \lor y \lor z)$.
	We have $\sigma(a_{q}^cb_{q}^c) = \sigma(s_qt_q)$ for every $q \in \{x,y,z\}$.
	Moreover, there exist two literals $p,q \in \{x,y,z\}$ such that $\sigma(a_{p}^cb_{p}^c) \neq \sigma(a_{q}^cb_{q}^c)$.
\end{lemma}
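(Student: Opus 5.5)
We mirror the proof of \Cref{lem:stack-layouts-components-size-clause}, replacing crossings by nestings and \Cref{obs:stack-layouts-components-size-variable} by \Cref{obs:queue-layouts-components-size-variable}. The lemma makes two claims. The first is that each literal edge shares the page of the corresponding selecting edge. The second is that the three literal edges of a clause cannot all lie on one page. We prove them in this order, each by contradiction.

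For the first statement, fix $q \in \{x,y,z\}$ and suppose $\sigma(a_q^cb_q^c) \neq \sigma(s_qt_q)$. Since there are only two pages, $a_q^cb_q^c$ is then on the page opposite to $s_qt_q$. By \Cref{obs:queue-layouts-components-size-variable}, the four variable edges $a_qb_q^i$, $i=1,2,3,4$, also lie on the page opposite to $s_qt_q$. That page therefore holds at least five edges of color $\col{q}$, which exceeds the budget $b_1(\col{q}) = b_2(\col{q}) = 4$. This contradicts admissibility. The argument does not depend on the spine order inside the clause gadget; it uses only the observation and the budgets.

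For the second statement, suppose all three literal edges lie on a common page $p$. By construction $u_q^c \prec a_q^c \prec b_q^c \prec v_q^c$ for each $q \in \{x,y,z\}$, so the validator edge $u_q^cv_q^c$ nests the literal edge $a_q^cb_q^c$. The two edges are independent, since all four endpoints are distinct. In a queue layout they must therefore be on different pages, which puts all three of $u_x^cv_x^c$, $u_y^cv_y^c$, $u_z^cv_z^c$ on the other page $3-p$. These three edges all have color $\col{c}$, while $b_{3-p}(\col{c}) = 2$, a contradiction.

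The only point needing care is that every forced page separation comes from a genuine nesting of independent edges in the given order $\prec$. For the clause gadget this is immediate from the stated order. Everything about the variable gadget is already supplied by \Cref{obs:queue-layouts-components-size-variable}, so I expect no real obstacle.
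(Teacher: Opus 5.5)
Your proposal is correct and matches the paper's proof. The first statement follows from \Cref{obs:queue-layouts-components-size-variable} together with the budget $b_1(\col{q}) = b_2(\col{q}) = 4$, since otherwise five edges of color $\col{q}$ would share a page. The second follows because each validator edge $u_q^cv_q^c$ nests its literal edge, so three literal edges on one page would force the three validators onto the other page, exceeding the budget of $2$ for $\col{c}$.
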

\begin{proof}
	Let $c \in C$ be a clause of $\varphi$ with $c = (x \lor y \lor z)$.
    The arguments for the first statement are analogous to those for \Cref{lem:stack-layouts-components-size-clause}.

    For the second statement, we observe that $u_{q}^c \prec a_{q}^c \prec b_{q}^c \prec v_{q}^c$ implies $\sigma(a_{q}^cb_{q}^c) \neq \sigma(u_{q}^cv_{q}^c)$ for every $q \in \{x,y,z\}$.
    Thus, any solution \linL with $\sigma(a_{x}^cb_{x}^c) = \sigma(a_{y}^cb_{y}^c) = \sigma(a_{z}^cb_{z}^c)$ must also have $\sigma(u_{x}^cv_{x}^c) = \sigma(u_{y}^cv_{y}^c) = \sigma(u_{z}^cv_{z}^c)$.
	Since these edges have color $\col{c}$ we derive that \linL cannot be a solution since $b_1(\col{c}) = b_2(\col{c}) = 2 < 3$.
    Hence, if \linL is a solution, then $\sigma(a_{x}^cb_{x}^c) = \sigma(a_{y}^cb_{y}^c) = \sigma(a_{z}^cb_{z}^c)$ cannot hold, and we can find $p,q \in c$ that fulfills the second claim.
\end{proof}

\subparagraph*{Putting Everything Together.}
For an instance $\varphi$ of \probname{Monotone Not All Equal 3Sat-E4} with variables $X = \{x_1, \ldots, x_\eta\}$ and clauses $C = \{c_1, \ldots, c_\mu\}$, we replace every variable and clause with its corresponding gadget and order them on the spine by their index (where variables precede the clauses).
The following theorem establishes correctness of the reduction.
\end{statelater}

\begin{restatable}\restateref{thm:queue-layouts-components-size}{theorem}{theoremQueueComponentsSize}
	\label{thm:queue-layouts-components-size}
	\capacitatedQueueLong is \NP-complete even when restricted to balanced and tight instances $\instanceLongQueue$ 
    with $\ell = 2$ and where every connected component of $\queueConflict(G, \prec)$ has at most five vertices.
\end{restatable}
\begin{prooflater}{ptheoremQueueComponentsSize}
	Containment in \NP\ is immediate and we proceed with showing \NP-hardness.
    Let $\varphi$ be an instance of \probname{Monotone Not All Equal 3Sat-E4} on $\eta$ variables $X$ and $\mu$ clauses $C$.
    We construct the above-described instance $\instance = \instanceLongQueueTwoPages$ of \capacitatedQueue.
    A careful analysis of the gadgets shows that it can be constructed in $\BigO{\eta + \mu}$ time, has two pages, $\BigO{\eta + \mu}$ colors, and for each color $c \in [\chi]$ it holds $b_1(c) = b_2(c) = \Size{\delta^{-1}(c)} / 2$.
    For the same reason as for \Cref{thm:stack-layouts-components-size}, we can observe that the largest connected component of $\queueConflict$ is induced by the variable gadgets and consists of five vertices; recall also \Cref{fig:queue-layout-component-size-gadgets}.
    We conclude the proof by establishing correctness of the reduction.
	
	\proofsubparagraph{($\boldsymbol{\Rightarrow}$)}
	Assume that $\varphi$ admits a solution $\Psi\colon X \to \{0,1\}$.
    We now show that also \instance admits a solution \linL.
	To this end, consider a variable $x \in X$ and its corresponding variable gadget.
    As in the proof of \Cref{thm:stack-layouts-components-size}, we assign $\sigma(s_xt_x)$ to the first page if and only if $x$ is set to false.
	All other edges are assigned to the page $p \in \{1,2\} \setminus \{\sigma(s_xt_x)\}$; recall also \Cref{obs:queue-layouts-components-size-variable}.
    Since edges in the variable gadget are only in conflict with the selecting edge (if at all), this assignment does not introduce any nesting.
    Since the selecting and dummy edge are on different pages (and the only edges of color $d_x$), we match the budget for $d_x$ on both pages.
	For a clause $c = (x \lor y \lor z) \in C$, we assign the literal edge $a_{q}^cb_{q}^c$ to the page $\sigma(s_qt_q)$ for every $q \in \{x,y,z\}$.
    To prevent nesting, we set $\sigma(u_{q}^cv_{q}^c) \neq \sigma(a_{q}^cb_{q}^c)$.
    Since $\Psi$ witnesses the satisfiability of $\varphi$, there exists a unique variable $q \in \{x,y,z\}$ whose truth assignment differs from the other two.
	We assign $u_{\top}^cv_{\top}^c$ to the same page as $u_{q}^cv_{q}^c$, i.e., $\sigma(u_{\top}^cv_{\top}^c) = \sigma(u_{q}^cv_{q}^c)$.
    As $u_{\top}^cv_{\top}^c$ is an isolated vertex in $\queueConflict$, this does not introduce any nesting.
    Moreover, each page contains precisely two edges of color $\col{c}$ and we also have $b_1(\col{c}) = b_2(\col{c}) = 2$.
    To finally conclude that \linL is a solution to \instance, we make the observation that in the variable gadget for any variable $x \in X$, all four edges $a_xb_x^i$, $i = 1,2,3,4$, are assigned to the same page, which is different from $\sigma(s_xt_x)$.
    For each of the four clauses $c \in C$ that contain the variable $x$, we have $\sigma(s_xt_x) = \sigma(a_{x}^cb_{x}^c)$.
    Thus, every page contains exactly four edges of color $\col{x}$, which matches their budget.
	
	\proofsubparagraph{($\boldsymbol{\Leftarrow}$)}
	Assume that \linL is a solution for the instance \instance of \capacitatedQueue.
	We now construct a variable assignment $\Psi\colon X \to \{0,1\}$ and consider a variable $x \in X$.
	We set $x$ to true if and only if $\sigma(s_xt_x) = 2$.
    Consider an arbitrary clause $c = (x \lor y \lor z) \in C$.
    By \Cref{lem:queue-layouts-components-size-clause}, we can find two literals $p,q\in \{x,y,z\}$ such that $\sigma(s_pt_p) = \sigma(a_{p}^cb_{p}^c) \neq \sigma(a_{q}^cb_{q}^c) = \sigma(s_qt_q)$. 
    Thus, there are two variables $p,q \in \{x, y, z\}$ such that $\Psi(p) \neq \Psi(q)$, i.e., they witness that the clause $c$ is satisfied under $\Psi$.
    Since $c$ was selected arbitrarily, we conclude that $\Psi$ is a solution to $\varphi$.
\end{prooflater}

\subparagraph*{\textsf{FPT}-Algorithm for Two Pages.}
Let $\instance = \instanceLongQueueTwoPages$ be an instance of \capacitatedQueue and note that $\ell = 2$, which implies $m \in \BigO{n}$ for positive instances~\cite{DBLP:journals/dmtcs/DujmovicW04}.
As for \Cref{thm:stack-layouts-components-fpt}, from $\ell = 2$ we can conclude that every connected component of the conflict graph $\queueConflict[G, \prec]$ must be bipartite; edges are assigned to the same page if their corresponding vertices are in the same bipartition.
If the number of connected components is bounded, we can branch over all valid page assignments.
The proof of the following theorem is identical to \Cref{thm:stack-layouts-components-fpt} (by using $\queueConflict$ instead of $\stackConflict$) and thus omitted.

\begin{theorem}
	\label{thm:queue-layouts-two-pages-components-fpt}
    \capacitatedQueueLong with $\ell = 2$ pages can be solved constructively in $\BigO{2^{\conComp}\cdot (n + \chi) + n^2}$ time (which is \FPT\ in $\conComp = \Size{\mathcal{C}(\queueConflict[G, \prec])}$).
\end{theorem}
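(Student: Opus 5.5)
The plan mirrors the proof of \Cref{thm:stack-layouts-components-fpt}, with the crossing conflict graph replaced by the nesting conflict graph $\queueConflict$. Let $\instance = \instanceLongQueueTwoPages$ be an instance of \capacitatedQueue with $\ell = 2$.

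First I bound the edge count. A graph admitting a 2-page queue layout has $m \in \BigO{n}$ edges~\cite{DBLP:journals/dmtcs/DujmovicW04}. So if $m$ exceeds the corresponding linear bound, I reject $\instance$ immediately. Otherwise $m \in \BigO{n}$, and I build $\queueConflict[G,\prec]$ by testing every pair of edges for nesting, which takes $\BigO{m^2} = \BigO{n^2}$ time. In the same time I compute its connected components $C_1, \ldots, C_{\conComp}$ by BFS and test bipartiteness.

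Page assignments of a 2-page queue layout correspond exactly to proper 2-colorings of $\queueConflict$. So if some component is not bipartite, I reject. Otherwise each component $C_i$ has a bipartition $E_i^1 \cup E_i^2$ that is unique up to swapping the sides. Hence every valid assignment restricted to $C_i$ is one of exactly two options: $E_i^1$ goes to page $1$ and $E_i^2$ to page $2$, or the reverse. Edges in different components never nest each other, so the choices for different components are independent. Combining one choice per component therefore always yields a nesting-free layout, and every queue layout arises this way.

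I then branch over all $2^{\conComp}$ combinations of choices. For each branch I count the edges of every color on each page and compare the counts with $b_1$ and $b_2$, which takes $\BigO{m + \chi} = \BigO{n+\chi}$ time. I output a branch that respects all budgets, or reject if none does. Correctness follows from the bijection above: $\instance$ is positive if and only if some branch is admissible. The total running time is $\BigO{2^{\conComp}(n+\chi) + n^2}$, as claimed.

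The only point needing care is the $n^2$ term. It relies on the edge-count bound, so that building the conflict graph stays quadratic in $n$ rather than in $m$. This is why dense instances must be rejected before $\queueConflict$ is computed.
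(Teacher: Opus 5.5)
Your proposal is correct and matches the paper's approach. The paper likewise transfers the proof of \Cref{thm:stack-layouts-components-fpt} to $\mathcal{Q}$, replacing the planarity test by the linear edge bound for $2$-queue graphs to obtain $m \in \BigO{n}$, and then rejects non-bipartite components and branches over the $2^{\conComp}$ per-component page choices, checking the budgets in $\BigO{n+\chi}$ time per branch.
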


\subparagraph*{Hardness for Three Pages.}
We now show that we cannot generalize \Cref{thm:queue-layouts-two-pages-components-fpt} to more than two pages.
More concretely, we show that \capacitatedQueue is \NP-complete for instances $\instanceLongQueue$ with $\ell = 3$ and where $\queueConflict$ is connected.
This rules out an \XP-algorithm with respect to the number of pages $\ell$ and the number of connected components $\conComp$.

\begin{restatable}\restateref{thm:queue-layouts-three-pages-connected}{theorem}{theoremQueueThreePagesConnected}
	\label{thm:queue-layouts-three-pages-connected}
	\capacitatedQueueLong is \NP-complete even 
    when restricted to instances $\instanceLongQueue$ with $\ell = 3$ and where $\queueConflict[G, \prec]$ is connected.
\end{restatable}
\begin{proofsketch}
    Membership in \NP\ is immediate.
    To show \NP-hardness, we take an instance $\instance = \instanceLongQueueTwoPages$ of \capacitatedQueue with $\ell = 2$ and where $\sum_{c \in [\chi]} (b_1(c) + b_2(c)) = \Size{E}$.
    Recall that \capacitatedQueue with $\ell = 2$ is \NP-complete by \Cref{thm:queue-layouts-components-size}.
    Let $\chi$ be the number of colors in \instance.
    We create an instance $\instance'$ with $\ell = 3$ by adding an edge $uv$ of color $\chi + 1$ to~$G$ that spans the entire instance \instance.
    Since $uv$ spans every other edge, the resulting conflict graph~$\queueConflict'$ is connected.
    We can set the budgets such that $uv$ is the only edge on the third page and no other edge can be assigned to that page.
    Thus, the instances $\instance$ and $\instance'$ are equivalent on the first two pages, i.e., \instance is a positive instance if and only if $\instance'$ is.
\end{proofsketch}
\begin{prooflater}{ptheoremQueueThreePagesConnected}
We focus on \NP-hardness since membership in \NP\ is immediate.
Let $\instance = \instanceLongQueueTwoPages$ be an instance of \capacitatedQueue with $\ell = 2$ and where $\sum_{c \in [\chi]} (b_1(c) + b_2(c)) = \Size{E}$.
Recall from \Cref{thm:queue-layouts-components-size} that this problem setting is already \NP-complete.
Let $\chi$ be the number of colors in \instance.
We introduce an edge $uv$ and set $\delta(uv) = \chi + 1$.
Moreover, we prepend $u$ to $\prec$ and append $v$ to $\prec$.
More formally, the new spine order $\prec'$ is obtained by taking the transitive closure of $u \prec' w \prec' v$ for every $w \in V(G)$ and $a \prec' b$ for every $a, b \in V(G)$ with $a \prec b$.
To fix the budgets $b_i'$, $i = 1,2,3$, we set $b_i'(c) = b_i(c)$ for $i = 1,2$ and $b_3'(c) = 0$ for every $c \in [\chi]$.
Furthermore, we set $b_1'(\chi + 1) = b_2'(\chi + 1) = 0$ and $b_3'(\chi + 1) = 1$.
Let $\instance' = (G', \delta', \prec', \ell' = 3, (b_i')_{i \in [\ell']})$ be the resulting instance of \capacitatedQueue.
We can observe that it can be constructed in polynomial time. 
The edge $uv$ nests all other edges and the corresponding vertex is, therefore, adjacent to every other vertex in $\queueConflict[G', \prec']$, i.e., $\queueConflict[G', \prec']$ is connected. 
The instances $\instance$ and $\instance'$ are equivalent up to the edge $uv$, for which the budgets ensure that it must be assigned to the third page and no other edge can be assigned to that page.
Consequently, \instance is a positive instance if and only if $\instance'$ is, which completes the proof.
\end{prooflater}

\subsection{Hardness for Monochromatic Instances}
\label{sec:queue-layouts-monochromatic}
With \Cref{thm:queue-layouts-components-size}, we have shown that \capacitatedQueue is \NP-complete even for fairly restricted instances.
However, the hardness reduction employed in \Cref{thm:queue-layouts-components-size} constructs instances with a linear number of colors.
Therefore, it is natural to ask for the complexity when parameterized by the number of colors $\chi$.
In this section, we show that \capacitatedQueue remains \NP-complete even when $\chi=1$, i.e., every edge of $G$ has the same color, which we call \emph{monochromatic} instances, and, as we see later, \W[1]-hard with respect to $\ell$.\footnote{We note that a similar reduction was given by Gomes et al.~\cite{DBLP:journals/dmtcs/GomesLS19} to show that \probname{Equitable Coloring} is
\W[1]-hard parameterized by the number of colors.
We expect that their reduction can be adapted to our setting, but present for completeness a self-contained reduction that furthermore allows additional edge-colors or different budgets, which we exploit in \Cref{thm:queue-layouts-two-colors-connected-conflict,cor:queue-layouts-monochromatic-pages-connected}.}

\subparagraph*{Overview of the Reduction.}
We reduce from the problem \probname{Bin Packing} defined as follows.
Given a set $S$ of $\eta$ items, each item $x \in S$ associated with a weight $w(x) \in \mathbb{N}$, and $k$ bins of capacity $B \in \mathbb{N}$, can $S$ be partitioned into $k$ subsets $T_1, \ldots, T_k$ such that $\sum_{x \in T_i} w(x) \leq B$ for every $i \in [k]$?
\probname{Bin Packing} is known to be strongly \NP-complete and thus we can assume that the weight of each item $x \in S$ and the capacity $B$ are polynomially bounded in $\eta$~\cite{GJ.CIG.1979}.
Moreover, \probname{Bin Packing} remains \NP-complete even if $\sum_{x \in S} w(x) = k\cdot B$, i.e., if every bin must be filled up to its capacity $B$ since we can simply introduce sufficiently many dummy items of weight one.
On a high-level, in our reduction we represent every bin by a page.
Assigning edges to that page should correspond to placing the items in a bin.
To ensure correctness, we encode every item $x \in S$ with a gadget consisting of two components.
The first component consists of $k - 1$ pairwise nesting edges, a \emph{($k-1$)-rainbow}, which must, therefore, be distributed among $k - 1$ pages.
The second component consists of a star that represents the weight of $x$.
By placing it below the ($k-1$)-rainbow, all edges of the star must be placed on the unique page that does not contain an edge of the rainbow.
Since the instance is monochromatic, we cannot use colors to distinguish edges from the star, which should count towards our bin-capacity, from those of the rainbow, whose sole purpose is to ensure correctness of the reduction.
To circumvent this limitation, we scale the weight of every item by a large number to make it distinguishable from the rainbow edges.

\subparagraph*{Item Gadget.}
Let $x \in S$ be an item.
The item gadget for $x$ contains $k - 1$ \emph{rainbow edges} $s_x^it_x^i$, $i \in [k - 1]$, and we order their endpoints along $\prec$ as follows.
We set $s_x^i \prec t_x^j$ for every $i,j \in [k - 1]$.
Moreover, we set $s_x^i \prec s_x^{i + 1}$ and $t_x^{i + 1} \prec t_x^{i}$ for every $i \in [k - 2]$.
\begin{figure}
	\centering
	\includegraphics[page=1]{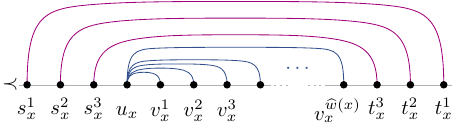}
	\caption{The item gadget for an item $x \in S$ and with $k = 4$. Although the constructed instance is monochromatic, we color edges from the rainbow and the item purple and blue, respectively.}
	\label{fig:queue-layout-item-gadget}
\end{figure}

We define $\widehat{w}(x) \coloneqq k \cdot \eta \cdot w(x)$ to be the \emph{scaled weight} of the item $x$.
Observe that the scaled weight of $x$ is still polynomially bounded in $\eta$.
The \emph{value star} of the item gadget is a star $K_{1,\widehat{w}(x)}$ with center $u_x$ and leaves $v_x^i$, $i \in [\widehat{w}(x)]$.
We place the value star below the rainbow edges, i.e., set $s_x^{k - 1} \prec u_x \prec v_x^1 \prec v_x^2 \prec \ldots \prec v_x^{\widehat{w}(x)} \prec t_x^{k - 1}$ and take the transitive closure of this and the above-defined relations; see the blue edges in \Cref{fig:queue-layout-item-gadget}.
Since the rainbow edges are pairwise nesting and each rainbow edge nests the value star, we observe:
\begin{observation}
	\label{obs:queue-layouts-monochromatic-item-gadget}
	In every $k$-page queue layout \linL of the item gadget for an item $x \in S$ we have $\sigma(s_x^it_x^i) \neq \sigma(s_x^jt_x^j)$ for every $i,j \in [k - 1]$, $i \neq j$.
	Moreover, we have $\sigma(s_x^it_x^i) \neq \sigma(u_xv_x^j)$ for every $i \in [k - 1]$ and $j \in [\widehat{w}(x)]$ and thus $\sigma(u_xv_x^i) = \sigma(u_xv_x^j)$ for every $i,j \in [\widehat{w}(x)]$.
\end{observation}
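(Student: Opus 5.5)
The observation has three parts, and I will prove them in order, using only the definition of nesting and the fact that a queue layout never puts two nesting edges on the same page.

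First, I will show that the rainbow edges are pairwise nesting. Take $i<j$ in $[k-1]$. By construction $s_x^i \prec s_x^j$. The chain $t_x^{j}\prec t_x^{j-1}\prec\dots\prec t_x^{i}$ gives $t_x^j \prec t_x^i$. Since every $s$ precedes every $t$, we also have $s_x^j \prec t_x^j$. Together this yields $s_x^i \prec s_x^j \prec t_x^j \prec t_x^i$, so $s_x^jt_x^j$ is nested inside $s_x^it_x^i$. The two edges share no endpoint, so they are independent and cannot lie on the same page. This proves the first claim.

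Second, I will show that every star edge is nested by every rainbow edge. Fix $i\in[k-1]$ and a star edge $u_xv_x^j$. Chaining the given relations with the transitive closure gives
\[
s_x^i \preceq s_x^{k-1} \prec u_x \prec v_x^j \prec t_x^{k-1} \preceq t_x^i .
\]
The edges $s_x^it_x^i$ and $u_xv_x^j$ are independent, so the rainbow edge nests the star edge. Hence $\sigma(s_x^it_x^i)\neq\sigma(u_xv_x^j)$, which is the second claim.

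Third, I will use a counting argument. By the first part, the $k-1$ rainbow edges occupy $k-1$ distinct pages among the $k$ available. By the second part, every star edge avoids all of these pages. So every star edge must lie on the single remaining page, and therefore $\sigma(u_xv_x^i)=\sigma(u_xv_x^j)$ for all $i,j$.

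There is no real obstacle here. The only points to check carefully are that the compared edges are genuinely independent, and that the transitive closure indeed places $u_x$ and all $v_x^j$ strictly between $s_x^{k-1}$ and $t_x^{k-1}$. Both follow directly from the construction. Note also that the star edges share the endpoint $u_x$ and so do not conflict with each other; this is why they can all occupy the same page.
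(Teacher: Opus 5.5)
Your proposal is correct and follows the paper's reasoning: the paper justifies the observation simply by noting that the rainbow edges pairwise nest and each rainbow edge nests every edge of the value star. Your explicit derivation of these nestings from the transitive closure, and the counting argument that places all star edges on the single page left free by the $k-1$ rainbow edges, just spell out what the paper leaves implicit.
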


Assigning the value star to page $p \in [k]$ should correspond to placing the item in the $p$th bin.
\Cref{obs:queue-layouts-monochromatic-item-gadget} ensures that this equivalence between the two problems is well-defined.

\subparagraph*{Putting Everything Together.}
With the item gadget at hand, we are now ready to describe the complete reduction.
Let $(S, w, k, B)$ be an instance of \probname{Bin Packing} with items $S = \{x_1, \ldots, x_{\eta}\}$, $\eta = \Size{S}$.
For every item $x_i$, we construct the item gadget and order all item gadgets on the spine based on the index of the item they represent.
\shortLong{
Moreover, we introduce \emph{dummy} edges which we place at the end of the spine such that they are not in conflict with each other or any edge from an item gadget.
These dummy edges will help us to satisfy the budget on every page.
To this end, observe that every page will receive $k \cdot \eta \cdot B$ non-rainbow edges from the item gadgets in the intended solution.
Moreover, in the worst case a bin will host a single item, which implies that it receives $\eta - 1$ rainbow edges for the items assigned to the other bins.
Hence, we have to set a budget of $b_p(1) = k \cdot \eta \cdot B + (\eta - 1)$ for every page $p \in [k]$ to ensure correctness; recall that the instance we construct is monochromatic.
To simplify our arguments (and strengthen our result), we want that every page receives $b_p(1)$ edges.
There are in total $\eta \cdot (k - 1)$ rainbow edges, and we need in total $k \cdot (\eta - 1)$ additional edges to fill up the budgets. 
This gives an imbalance of precisely $k \cdot (\eta - 1) - (\eta \cdot (k - 1)) = \eta - k$ edges (we can assume $\eta \geq k$), which will be the number of dummy edges that we introduce.%
}{
Moreover, we introduce $\eta - k$ \emph{dummy} edges which will help us to satisfy the budget on every page.
We place them at the end of the spine such that they are not in conflict with each other or any edge from an item gadget.
Finally, we give every edge $e$ the color $\delta(e) = 1$ and set for each page $p \in [k]$ the budget as $b_p(1) = k \cdot \eta \cdot B + (\eta - 1)$.
}
\Cref{fig:queue-layout-bin-packing-example} shows a small example of our reduction.
We now establish its correctness.

\begin{figure}
	\centering
	\includegraphics[page=1]{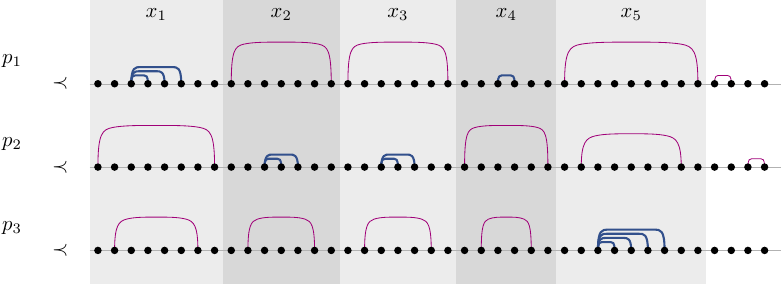}
	\caption{The constructed instance of \capacitatedQueue for an instance $(\{x_1, x_2, x_3, x_4, x_5\}, w, 3, 4)$ of \probname{Bin Packing} with $w(x_1) = 3$, $w(x_2) = w(x_3) = 2$, $w(x_4) = 1$, and $w(x_5) = 4$. For illustration purposes, we color rainbow and item edges purple and blue, respectively.
		Each bold edge represents $k\cdot\eta = 15$ edges. 
    Gray bands indicate the different items and pages are stacked vertically.}
	\label{fig:queue-layout-bin-packing-example}
\end{figure}

\begin{restatable}\restateref{thm:queue-layouts-monochromatic}{theorem}{theoremQueueMonochromatic}
	\label{thm:queue-layouts-monochromatic}
	\capacitatedQueueLong is \NP-complete even for instances that are balanced, tight, and monochromatic. %
\end{restatable}
\begin{proofsketch}
	Let $(S = \{x_1, \ldots, x_{\eta}\}, w, k, B)$ be an instance of \probname{Bin Packing} such that $\sum_{x \in S} {w(x)} = k\cdot B$ and $B$ and $w(x)$ are polynomially bounded in $\eta$ for every $x \in S$.
    Moreover, we assume without loss of generality $\eta \geq k \geq 2$.
    Let $\instance$ be the instance of \capacitatedQueue that we construct.
    We focus in this sketch on the backwards direction of the correctness argument.
    
	Assume that $\instance$ admits a solution $\linL$.
    We assign an item $x \in S$ to bin $T_i$ if every edge of the star $K_{1, \widehat{w}(x)}$ is on page $i$, for all $i \in [k]$.
    \Cref{obs:queue-layouts-monochromatic-item-gadget} ensures that this is well-defined.
    Assume that $T_1, \ldots, T_k$ is not a solution of $(S, w, k, B)$, i.e., there exists a bin $T_i$ such that $\sum_{x \in T_i} w(x) \neq B$.
    First, assume $\sum_{x \in T_i} w(x) \geq B+ 1 > B$.
    By the construction of the bin $T_i$, there are at least $\sum_{x \in T_i} \widehat{w}(x)$ edges assigned to page $i$.
    It holds $\sum_{x \in T_i} \widehat{w}(x) = \sum_{x \in T_i} (k \cdot \eta \cdot w(x)) \geq k \cdot \eta \cdot (B + 1) > k \cdot \eta \cdot B + (\eta - 1) = b_i(1)$.
    This contradicts the assumption that \linL is a solution.
    Second, assume $\sum_{x \in T_i} w(x) < B$.
    By the pigeonhole principle, there is a different bin $T_j$, $j \in [k]$, $j \neq i$, such that $\sum_{x \in T_j} w(x) > B$.
    The above arguments lead to a contradiction.
    We conclude that $T_1, \ldots, T_k$ is a solution.
\end{proofsketch}
\begin{prooflater}{ptheoremQueueMonochromatic}
	Let $(S = \{x_1, \ldots, x_{\eta}\}, w, k, B)$ be an instance of \probname{Bin Packing} such that $\sum_{x \in S} {w(x)} = k\cdot B$ and $B$ and $w(x)$ are polynomially bounded in $\eta$ for every $x \in S$.
    In the following, we assume without loss of generality that $\eta \geq k \geq 2$ (otherwise, we can return a trivial positive or negative instance of \capacitatedQueue).
	Moreover, let $\instance = \instanceLongQueue[k]$ be the instance of \capacitatedQueue that we construct.
	Note that \instance can be constructed in polynomial time since $w(x)$ is polynomially bounded in $\eta$ for every $x \in S$.
	Moreover, we can observe that \instance is monochromatic, and we have $\ell = k$ and $b_p(1) = b_q(1)$ for every two pages $p,q \in [\ell]$.
	It remains to show correctness of the reduction.
	
	\proofsubparagraph{($\boldsymbol{\Rightarrow}$)}
	Assume that $(S, w, k, B)$ is a positive instance of \probname{Bin Packing} and let $T_1, \ldots, T_k$ be a partitioning of $S$ that witnesses this.
	We construct a linear layout \linL of \instance as follows.
	Consider a bin $T_i$, $i \in [k]$.
	Let $x \in T_i$ be an item in the $i$th bin and consider the corresponding item gadget.
	First, we assign all edges of the star $K_{1,\widehat{w}(x)}$ to page $i$, i.e., we set $\sigma(u_xv_x^j) = i$, $j \in [\widehat{w}(x)]$.
	Next, distribute the $k - 1$ rainbow edges in the gadget among the remaining $k - 1$ pages, i.e., for every page $p \in [k] \setminus \{i\}$ there is precisely one $q \in [k - 1]$ such that $\sigma(s_x^qt_x^q) = p$.
	Observe that this resembles the necessary page assignment from \Cref{obs:queue-layouts-monochromatic-item-gadget}.
	In particular, no two rainbow edges from the gadget are assigned to the same page or to the page of the star $K_{1,\widehat{w}(x)}$ (which itself does not introduce a nesting).
	Since $\sum_{x \in S} {w(x)} = k\cdot B$ and $T_1, \ldots, T_k$ is a solution, we have $\sum_{x \in T_i} w(x) = B$.
	Overall, we have assigned $\sum_{x \in T_i} {\widehat{w}(x)} = k \cdot \eta \cdot B $ edges to page $i$ so far.
	Doing the same procedure for the other bins $T_j$, $j \neq i$, we assign $\eta - \Size{T_i}$ edges in addition to page $i$.
	These account for the items that are placed in some bin $T_j$, $j \neq i$, for which we will assign one rainbow edge to page $i$.
	We ``fill up'' page $i$ to the budget $b_i(1)$, by assigning $\Size{T_i} - 1$ dummy edges to page $i$; we argue in the end that there are sufficiently many dummy edges.
	Furthermore, recall that they do not nest any other edge.
	We repeat these steps for every item in the bin $T_i$, and for every bin of the solution to eventually obtain a linear layout \linL.
	Thanks to the above arguments, \linL is an admissible queue layout.
	It remains to argue that there are enough dummy edges to fill up every page.	
	We assign $k \cdot \eta \cdot B$ non-rainbow edges from the item gadgets to every page.
	Thus, every page ``needs'' $\eta - 1$ additional edges.
	There are in total $\eta \cdot (k - 1)$ rainbow edges, and we need in total $k \cdot (\eta - 1)$ to fill up the page budgets.
	This gives an imbalance of precisely $k \cdot (\eta - 1) - (\eta \cdot (k - 1)) = \eta - k$ edges, which is the number of dummy edges.
	Combining all, we conclude that \instance is a positive instance of \capacitatedQueue.

	\proofsubparagraph{($\boldsymbol{\Leftarrow}$)}
	Assume that $\instance$ admits a solution $\linL$.
    We now construct a partitioning of $S$ into $k$ bins $T_1, \ldots, T_k$.
	Consider an item $x \in S$ and the corresponding item gadget.
    Due to \Cref{obs:queue-layouts-monochromatic-item-gadget}, there exists a page $i \in [k]$ that contains every edge of the star $K_{1, \widehat{w}(x)}$.
    We place the item in the bin $T_i$ and repeat this process for the remaining items of $S$.
    Towards arguing that $T_1, \ldots, T_k$ is a solution of $(S, w, k, B)$ we assume that it is not.
    Since we have $\sum_{x \in S} w(x) = k \cdot B$, this means there exists a bin $T_i$ such that $\sum_{x \in T_i} w(x) \neq B$.
    First, assume that $\sum_{x \in T_i} w(x) \geq B+ 1 > B$.
    By construction, this implies that we have assigned at least $\sum_{x \in T_i} \widehat{w}(x)$ edges to page $i$.
    Observe that we have $\sum_{x \in T_i} \widehat{w}(x) = \sum_{x \in T_i} (k \cdot \eta \cdot w(x)) \geq k \cdot \eta \cdot (B + 1)$.
    However, the budget of page $i$ is $b_i(1) = k \cdot \eta \cdot B + (\eta - 1) < k \cdot \eta \cdot B + k \cdot \eta$ (recall that we have $k \geq 2$).
    This contradicts the assumption that \linL is a solution.
    Hence, a bin $T_i$ with $\sum_{x \in T_i} w(x) > B$ cannot exist.
    Now, assume that there is a bin $T_i$ with $\sum_{x \in T_i} w(x) < B$.
    From $\sum_{x \in S} w(x) = k \cdot B$ and the pigeonhole principle, this means that there is a different bin $T_j$, $j \in [k]$, $j \neq i$, such that $\sum_{x \in T_j} w(x) > B$.
    However, by the above reasoning this cannot be the case.
    Combining all, we conclude that $T_1, \ldots, T_k$ is a solution of $(S, w, k, B)$.
\end{prooflater}

\subparagraph*{Consequences of \Cref{thm:queue-layouts-monochromatic}.}
We can observe that the hardness-construction behind \Cref{thm:queue-layouts-monochromatic} creates instances that have as many pages as the instance $(S, w, k, B)$ of \probname{Bin Packing} has bins, i.e., $k$.
This, together with the fact that the hardness already applies for monochromatic instances, allows us to derive hardness results with respect to several parameter combinations, which we discuss in the following.
First, note that unary \probname{Bin Packing} is \W[1]-hard with respect to the number of bins $k$~\cite{JKMS.Bpf.2013}:
\begin{corollary}
    \label{cor:queue-layouts-monochromatic-pages}
    \capacitatedQueueLong is \W\textup{[1]}-hard when parameterized by the number of pages $\ell$ even when restricted to balanced and monochromatic instances.
\end{corollary}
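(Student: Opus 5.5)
The plan is to obtain the corollary directly from the reduction behind \Cref{thm:queue-layouts-monochromatic}, viewed as a parameterized reduction from unary \probname{Bin Packing} parameterized by the number of bins $k$. That problem is \W[1]-hard~\cite{JKMS.Bpf.2013}. Two things need to be checked: the reduction runs in polynomial time, and the new parameter $\ell$ is bounded by a function of $k$ alone. The second is immediate, since the constructed instance has exactly $\ell = k$ pages.

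\textbf{Step 1: normalizing the input.} We may assume $\sum_{x \in S} w(x) = k \cdot B$. To achieve this, we add $k\cdot B - \sum_{x\in S} w(x)$ dummy items of weight one, and we reject outright if this number is negative. This padding preserves the number of bins and keeps all numbers unary, so it is itself a parameterized reduction and \W[1]-hardness in $k$ carries over. We may also assume $\eta \geq k \geq 2$: if $\eta < k$, we can pad with zero-weight items, or handle the case trivially, since every item then gets its own bin and only $w(x)\le B$ needs checking.

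\textbf{Step 2: size of the constructed instance.} The number of edges is $\eta(k-1) + \sum_x k\eta w(x) + (\eta-k)$. Since the weights are encoded in unary, this is polynomial in the input size, and the construction runs in polynomial time.

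\textbf{Step 3: properties of the output.} The instance is monochromatic, and every page has the same budget $k\eta B + \eta - 1$, so it is balanced. Correctness, i.e., equivalence of the two instances, is exactly what the proof of \Cref{thm:queue-layouts-monochromatic} establishes.

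The only delicate point is Step 1: we must make sure that the normalizations assumed in \Cref{thm:queue-layouts-monochromatic} keep the parameter $k$ unchanged and keep the encoding unary. Both hold, because padding only adds items and never adds bins.
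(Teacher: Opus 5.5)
Your proposal is correct and follows the same route as the paper. The paper also observes that the reduction behind \Cref{thm:queue-layouts-monochromatic} maps a unary \probname{Bin Packing} instance with $k$ bins to a balanced, monochromatic instance with exactly $\ell = k$ pages, so W[1]-hardness of unary \probname{Bin Packing} parameterized by $k$ transfers directly. Your explicit checks (unit-weight padding, trivially deciding the cases $\eta \le k$ or $k = 1$, and polynomial size under unary encoding) only spell out details that the paper leaves implicit.
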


Second, we can use the same ``trick'' as for \Cref{thm:queue-layouts-three-pages-connected} to turn the conflict graph $\queueConflict$ into a connected graph.
To this end, we add a new edge $uv$ of color $\delta(uv) = 2$ that spans the entire instance, i.e., which ensures that $\queueConflict$ is connected.
By adding a page $k + 1$ and adjusting the budget of every page $p \in [k + 1]$ such that precisely the additional edge $uv$ is placed on page $k + 1$, we preserve equivalence between the two instances:
\begin{restatable}\restateref{thm:queue-layouts-two-colors-connected-conflict}{theorem}{theoremQueueColorsConnected}
    \label{thm:queue-layouts-two-colors-connected-conflict}
    \capacitatedQueueLong is \NP-complete even when restricted to tight instances \instanceLongQueue where $\chi = 2$ and $\queueConflict(G, \prec)$ is connected.
\end{restatable}
\begin{prooflater}{ptheoremQueueColorsConnected}
    As already observed several times, \NP-membership is immediate.
    Towards showing \NP-hardness, recall first that \capacitatedQueue\ is \NP-complete even when restricted to monochromatic instances by \Cref{thm:queue-layouts-monochromatic}.
    Let $\instance = \instanceLongQueue$ be such a monochromatic instance.
    We now construct an instance $\instance' = (G', \delta', \prec', \ell' = \ell + 1, (b_i')_{i \in [\ell']})$ of \capacitatedQueue as follows.
    For $G'$, we take $G$ and introduce two new vertices $u$ and $v$ which are adjacent.
    We set $\delta'(uv) = 2$ and $\delta'(e) = \delta(e)$ for all remaining edges of $G'$.
    The new spine order $\prec'$ is obtained by taking the transitive closure of $u \prec' w \prec' v$ for every $w \in V(G)$ and $a \prec' b$ for every $a, b \in V(G)$ with $a \prec b$.
    For the budgets, we set $b_p'(1) = b_p(1)$ and $b_p'(2) = 0$ for every page $p \in [\ell]$.
    Moreover, we set $b_{\ell + 1}'(1) = 0$ and $b_{\ell + 1}'(2) = 1$.

    Since $uv$ nests every edge $e \in E(G)$, $\queueConflict[G', \prec']$ is connected. 
    $\instance'$ can be constructed in polynomial time.
    Moreover, by the definition of the budgets $b'$, we can transform every solution $\linL$ of \instance to a solution $\langle \prec', \sigma'\rangle$ of $\instance'$ by assigning the edge $uv$ to page $\ell + 1$.
    Symmetrically, a solution $\langle \prec', \sigma'\rangle$ of $\instance'$ can be transformed into a solution $\linL$ of \instance by dropping the edge $uv$, which shows correctness and completes the proof.
\end{prooflater}

Third, by applying (the reasoning that led to) \Cref{cor:queue-layouts-monochromatic-pages} to \Cref{thm:queue-layouts-two-colors-connected-conflict} and observing that we only need one additional page to ensure that \queueConflict is connected, we can conclude:
\begin{corollary}
    \label{cor:queue-layouts-monochromatic-pages-connected}
    \capacitatedQueueLong is \W\textup{[1]}-hard when parameterized by the number of pages $\ell$ even when restricted to instances \instanceLongQueue where $\chi = 2$ and $\queueConflict(G, \prec)$ is connected.
\end{corollary}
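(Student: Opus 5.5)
The plan is a parameterized reduction from unary \probname{Bin Packing} parameterized by the number of bins $k$, which is \W[1]-hard~\cite{JKMS.Bpf.2013}. I chain the two constructions already given. Start from an instance $(S,w,k,B)$ with $\sum_{x\in S}w(x)=k\cdot B$ and weights bounded by a polynomial in $\eta$. Apply the reduction behind \Cref{thm:queue-layouts-monochromatic} to obtain a monochromatic instance $\instance$ with $\ell=k$ pages, budgets $b_p(1)=k\cdot\eta\cdot B+(\eta-1)$, and size polynomial in $\eta$. Correctness of this step is exactly \Cref{thm:queue-layouts-monochromatic}. That theorem was proven for arbitrary $k$ (assuming $\eta\ge k\ge 2$), so it is a polynomial-time reduction whose output parameter $\ell$ equals the input parameter $k$.

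Next, apply the transformation from the proof of \Cref{thm:queue-layouts-two-colors-connected-conflict}. Add a new edge $uv$ of color $2$ whose endpoints are placed before and after the whole spine. Add page $\ell+1$ with budgets $b'_{\ell+1}(1)=0$ and $b'_{\ell+1}(2)=1$. On every old page $p\in[\ell]$, set $b'_p(1)=b_p(1)$ and $b'_p(2)=0$.

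The new instance $\instance'$ has $\chi=2$. Its conflict graph $\queueConflict(G',\prec')$ is connected, because $uv$ nests every other edge. It has $\ell'=k+1$ pages. The equivalence of $\instance$ and $\instance'$ is shown in the proof of \Cref{thm:queue-layouts-two-colors-connected-conflict}. Nothing in that argument depends on $\ell$: the budgets force $uv$ onto page $\ell+1$ and force every other edge off that page.

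Composing the two maps gives an instance of size polynomial in the input with new parameter $\ell'=k+1$. This is a bounded function of $k$, so the composition is a valid parameterized reduction, and \W[1]-hardness transfers.

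The only point needing care is confirming that the first reduction is genuinely parameter-preserving. It needs the unary (strongly \NP-hard) variant, so the scaled weights $k\cdot\eta\cdot w(x)$ stay polynomial. It also needs the assumption $\sum_x w(x)=kB$. Padding with unit-weight dummy items achieves this without changing $k$ and while keeping the instance unary, so the \W[1]-hardness for parameter $k$ is preserved. I do not expect any other obstacle.
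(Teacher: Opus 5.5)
Your proposal is correct and matches the paper's argument: it composes the Bin Packing reduction behind \Cref{thm:queue-layouts-monochromatic}, which keeps the parameter at $\ell = k$, with the spanning edge $uv$ of color $2$ placed alone on one extra page from \Cref{thm:queue-layouts-two-colors-connected-conflict}. The resulting parameter is $k+1$, which gives the parameterized reduction.
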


\subsection{Parameterizing by the Number of Pages and Colors}
\label{sec:queue-layouts-pages-colors}
In this section, we show that \capacitatedQueue is in \XP\ when parameterized by the number of pages $\ell$ and the number of colors $\chi$.
By \Cref{cor:queue-layouts-monochromatic-pages}, this result is the ``best possible'' we can hope for.
Moreover, \capacitatedQueue is para\NP-hard when parameterized by $\ell$ and $\chi$ alone; recall \Cref{thm:queue-layouts-components-size,thm:queue-layouts-monochromatic}.
On a high-level, we employ a dynamic program (DP), which considers the edges in lexicographic order $<$ according to $\prec$, i.e., for two edges $uv, xy \in E(G)$ with $u \prec v$ and $x \prec y$ we have $uv < xy$ if and only if $u \prec x$, or $u = x$ and $v \prec y$.
When processing the next edge $uv$, we consider every possible page assignment for $uv$.
Processing the edges in lexicographic order has the algorithmic advantage that we can easily prevent two nesting edges as the following lemma underlines; see also \Cref{fig:queue-layout-nesting}.
\begin{figure}
	\centering
	\includegraphics[page=1]{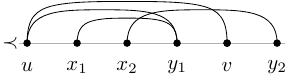}
	\caption{Visualization of \Cref{lem:queue-layouts-pages-colors-nesting}. Observe that only the edges $uv$ and $x_1y_1$ are nesting. The edges $uv$ and $uy_1$ are not nesting although $y_1 \prec v$. Note that $y_1 \prec v$ implies $uy_1 < uv$.}
	\label{fig:queue-layout-nesting}
\end{figure}

\begin{restatable}\restateref{lem:queue-layouts-pages-colors-nesting}{lemma}{lemmaQueuePagesColorsNesting}
	\label{lem:queue-layouts-pages-colors-nesting}
	Let $uv, xy \in E$ be two edges such that $u \prec v$, $x \prec y$ and $uv < xy$.
	The edges $uv$ and $xy$ are nesting if and only if $y \prec v$.
\end{restatable}
\begin{proofsketch}
	For the forward direction, observe that a nesting implies $u \prec x \prec y \prec v$ or $x \prec u \prec v \prec y$.
    However, only the former is compatible with $uv < xy$.
    For the backward direction, we use $uv < xy$ to get $u \preceq x$.
    Since $u = x$ is impossible because of $y \prec v$, we have $u \prec x$, which together with $u \prec v$ and $x \prec y$ gives $u \prec x \prec y \prec v$, i.e., the edges nest.
\end{proofsketch}
\begin{prooflater}{plemmaQueuePagesColorsNesting}
	Let $uv, xy \in E$ be two edges as in the statement.
	
	\proofsubparagraph*{($\boldsymbol{\Rightarrow}$)}
	Assume that the two edges are nesting, which means that we have $u \prec x \prec y \prec v$ or $x \prec u \prec v \prec y$.
	Only the former is possible by $uv < xy$, which completes this direction.
	
	\proofsubparagraph*{($\boldsymbol{\Leftarrow}$)}
	Assume that we have $y \prec v$.
	From $uv < xy$ we get that $u \preceq x$ must hold.
	Observe that $u = x$ is not possible since $u = x$ and $y \prec v$ would imply $xy < uv$.
	Thus, we have $u \prec x$ and together with $x \prec y$ and $y \prec v$, this implies $u \prec x \prec y \prec v$, i.e., the two edges nest.
\end{prooflater}

We can now present our \XP-algorithm, which can be seen as a more involved variant of the classical polynomial-time algorithm for fixed-order queue layouts~\cite{HR.LOG.1992}.

\begin{restatable}\restateref{thm:queue-layouts-pages-colors-xp}{theorem}{theoremQueuePagesColors}
	\label{thm:queue-layouts-pages-colors-xp}
    \capacitatedQueueLong can be solved constructively in $\BigO{\ell\cdot {(m + 1)}^{(\ell \cdot \chi) + 1} \cdot (n + 1)^{\ell + 1}}$ time (which is \XP\ in $\ell + \chi$).
\end{restatable}
\begin{proofsketch}
    We assume that the edges $E=\{e_1, \ldots, e_m\}$ are indexed according to their lexicographic order $<$ (which we can compute in $\BigO{m \log m}$ time).
	Furthermore, we index the vertices $V = \{v_1, \ldots, v_n\}$ according to $\prec$.
    We employ a dynamic programming (DP) approach that considers the edges $e_1, \ldots, e_m$ according to the lexicographic order $<$ and evaluates all potential page assignments for each new edge.
    To this end, we consider \emph{states} $I$.
    A state $I$ stores information about the currently processed edge $e(I)$.
    Moreover, it stores the number $x(I, p, c)$ of edges of color $c \in [\chi]$ which are currently assigned to page $p$.
    Finally, it keeps track of the rightmost vertex $v(I, p)$ which is incident to an edge on page $p$.
    We write $v(I, p) = \bot$ if there is no edge assigned to page $p$.
    Observe that the number of states is in $\BigO{m \cdot {(m + 1)}^{\ell \cdot \chi} \cdot (n + 1)^{\ell}}$: the graph $G$ contains $m$ edges, $x(I, p, c) \in [m]_0$ for every page $p \in [\ell]$ and color $c \in [\chi]$, and there are $n + 1$ possible values for $v(I, p)$ for every page $p \in [\ell]$.

    Let $G_i$ be the graph $G_i \coloneqq (V, \{e_1, \ldots, e_i\})$. 
    We use a binary DP-table $D$ to store for each state $I$ whether it is \emph{feasible}, i.e., whether there exists an $\ell$-page queue layout $\langle \prec, \sigma_i\rangle$ of~$G_i$, where $i$ is the index of the edge $e(I)$, such that there are precisely $x(I, p, c)$ edges of color~$c$ assigned to page $p$, for every $c \in [\chi]$ and $p \in [\ell]$, and $v(I, p)$ is the rightmost vertex which is incident to an edge on page $p$, for every $p \in [\ell]$.
    Such a layout \emph{witnesses} the feasibility of $I$.

	To compute $D[I]$, we assume that $D[I']$ has already been computed for every state $I'$ where $e(I') = e_{j}$, $e(I) = e_i$, and $j < i$.
	Consider the base case where $e(I) = e_1 = uv$, $u \prec v$.
	We set $D[I] = 1$ if and only if $I$ models the assignment of $e_1$ to some page $p \in [\ell]$.
    In particular, we require $x(I, q, c) = 1$ if $c = \delta(e_1)$ and $q = p$ and $x(I, q, c) = 0$ otherwise.
    Moreover, we require $v(I, q) = v$ if $q = p$ and $v(I, q) = \bot$ otherwise.
    In the general case with $e(I) = e_i = uv$, $u \prec v$, for some $i > 1$, we set $D[I] = 1$ if and only if there exists a state~$I'$ with $e(I') = e_{i - 1}$ and $D[I'] = 1$ such that $I$ can be obtained from $I'$ by assigning $uv$ to some page $p \in [\ell]$.
    In particular, this means $x(I, q, c) = x(I', q, c) + t$ where $t = 1$ if $c = \delta(e_i)$ and $q = p$ and $t = 0$ otherwise.
    To ensure that $uv$ does not nest with some other edge on page $p$, we make use of \Cref{lem:queue-layouts-pages-colors-nesting} and require $v(I', p) = \bot$ or $v(I', p) = w$ for some $w \in V$ with $w \preceq v$.
    Of course, also $v(I, q)$ must be consistent with the assignment of $uv$ to page $p$, i.e., $v(I, q) = v$ if $q = p$ and $v(I, q) = v(I', q)$ otherwise.
    Using induction on (the order of) $e(I)$, we can show that we have $D[I] = 1$ if and only if $I$ is feasible (see also \Cref{claim:queue-layouts-pages-colors-xp-correctness} in \Cref{app:queue-layouts-pages-colors}).
    While for the base case, i.e., for states with $e(I) = e_1$, this is immediate, we now sketch the inductive step with $e(I) = e_i$, $i > 1$.
    Assume that the statement holds for all states $I$ with $e(I) = e_{j}$ for $j < i$.
    
    ($\Rightarrow$) If $I$ is feasible, there is a witnessing $\ell$-page queue layout $\langle \prec, \sigma_i\rangle$ of $G_i$.
    Consider the layout $\langle \prec, \sigma_{i - 1}\rangle$ that we get by removing the edge $e_i$ from $\langle \prec, \sigma_i\rangle$.
    Since queue layouts are preserved under edge deletion, it is an $\ell$-page queue layout of $G_i - \{e_i\} = G_{i - 1}$ which  witnesses the feasibility of the corresponding state $I'$.
    Note that we consider the state $I'$ when evaluating $D[I]$ as $e(I') = e_{i - 1}$ and by induction hypothesis we have $D[I'] = 1$.
    A careful examination of our criteria for $D[I] = 1$ shows that this implies $D[I] = 1$.
    
    ($\Leftarrow$) If $D[I] = 1$, we consider the state $I'$ and the page $p \in [\ell]$ which we used to conclude $D[I] = 1$.
    Let $\langle \prec, \sigma_{i - 1}\rangle$ be a witnessing $\ell$-page queue layout of $G_{i - 1}$ and observe that it must exist by the induction hypothesis (as $e(I') = e_{i - 1}$).
    We extend $\langle \prec, \sigma_{i - 1}\rangle$ by assigning the edge $e_i$ to page $p$.
    Since we require $v(I', p) = \bot$ or $v(I', p) = w$ for some $w \in V$ with $w \preceq v$ and consider the edges in lexicographic order, the resulting layout $\langle \prec, \sigma_{i}\rangle$ is an $\ell$-page queue layout of $G_{i - 1} + \{e_i\} = G_i$.
	By also examining the number of edges of each color on each page and the respective rightmost vertices, we get that $\langle \prec, \sigma_{i}\rangle$ witnesses the feasibility of $I$.

    Thus, we have $D[I] = 1$ if and only if $I$ is feasible and \instance has a solution if and only if there is a state $I$ with $e(I) = e_m$ and $D[I] = 1$ such that $x(I, p, c) \leq b_p(c)$ for every page $p \in [\ell]$ and color $c\in [\chi]$.
    We argue the running time of our DP in the full proof (see \Cref{app:queue-layouts-pages-colors}) and conclude this sketch by 
    noting that we can obtain a solution (if there exists one) via standard backtracking techniques.%
\end{proofsketch}
\begin{prooflater}{ptheoremQueuePagesColors}
	For the following arguments, fix an instance $\instance = \instanceLongQueue$ of \capacitatedQueue.
	We first order the edges of $G$ lexicographically according to $\prec$.
	This preprocessing step can be done in $\BigO{m \log m}$ time.
	Let $<$ denote the resulting order of the edges.
	For ease of presentation, we re-index the edges $E=\{e_1, \ldots, e_m\}$ of $G$ according to $<$, i.e., for every two edges $e_i,e_j \in E$ we have $e_i < e_j$ if and only if $i < j$.
	Furthermore, we index the vertices $V = \{v_1, \ldots, v_n\}$ of $G$ according to $\prec$, i.e., for every two vertices $v_i,v_j \in V$ we have $i < j$ if and only if $v_i \prec v_j$.
	Call this first step the \emph{preprocessing} step.
	
	\proofsubparagraph{The Algorithm.}
	We employ a dynamic programming (DP) approach that considers the edges according to the lexicographic order $<$ and evaluates all potential page assignments for each new edge.
	Thanks to \Cref{lem:queue-layouts-pages-colors-nesting} it is sufficient to store for each page, in addition to the number of edges on that page of each color, the rightmost vertex which is incident to an edge on that page.
	More formally, we consider \emph{states}, which are defined by an edge $e_i \in E$, a number $x_p^c \in [m]_0$ for every page $p \in [\ell]$ and color $c \in [\chi]$, and up to one vertex $v_p \in V \cup \{\bot\}$ for every page $p \in [\ell]$.
	For a state $I$, we use $e(I)$ to address the edge $e_i$, $x(I, p, c)$ to address $x_p^c$ for $p \in [\ell]$ and $c \in [\chi]$, and $v(I, p)$ to address the vertex ${v_p}$ for $p \in [\ell]$ or $\bot$ if there is no such vertex ${v_p}$.
	Let $G_i$ be the graph $G_i \coloneqq (V, E_i)$ where $E_i \coloneqq \{e_1, \ldots, e_i\}$.
	We call a state $I$ with $e(I) = e_i$ \emph{feasible} if and only if there exists an $\ell$-page queue layout $\langle \prec, \sigma_i\rangle$ of $G_i$ such that there are precisely $x(I, p, c)$ edges of color $c$ assigned to page $p$, for every $c \in [\chi]$ and $p \in [\ell]$, and $v(I, p)$ is the rightmost vertex which is incident to an edge on page $p$, for every $p \in [\ell]$.
	If no edge is assigned to a page $p$, then we require $v(I,p) = \bot$.
	In our algorithm, we maintain a binary DP-table $D$ which should store, for every state $I$, whether it is feasible.
	
	Consider a state $I$.
	We now describe how to compute $D[I]$, assuming that $D[I']$ has already been computed for every state $I'$ where $e(I') = e_{j}$, $e(I) = e_i$, and $j < i$.
	First, we consider the base case where $e(I) = e_1 = uv$, $u \prec v$.
	We set $D[I] = 1$ if and only if there exists a page $p \in [\ell]$ such that
	\begin{enumerate}[B1.]
		\item $x(I, p, \delta(e_1)) = 1$,\label{dp:queue-layouts-pages-colors-xp-base-start}
		\item $x(I, p, c) = 0$ for every color $c \in [\chi] \setminus \{\delta(e_1)\}$,
		\item $x(I, q, c) = 0$ for every page $q \in [\ell] \setminus \{p\}$ and color $c \in [\chi]$,
		\item $v(I, p) = v$, and
		\item $v(I, q) = \bot$ for every page $q \in [\ell] \setminus \{p\}$.\label{dp:queue-layouts-pages-colors-xp-base-end}
	\end{enumerate}
	\smallskip
	Observe that the above conditions encode the assignment of the edge $e_1$ to the page $p$.
	
	Now consider the general case with $e(I) = e_i = uv$, $u \prec v$, for some $i > 1$. 
	We set $D[I] = 1$ if and only if there exists a state $I'$ with $e(I') = e_{i - 1}$, $D[I'] = 1$, and a page $p \in [\ell]$ such that all of the following conditions are satisfied.
	\begin{enumerate}[G1.]
		\item $x(I, p, \delta(e_i)) = x(I', p, \delta(e_i)) + 1$\label{dp:queue-layouts-pages-colors-xp-general-start},
		\item $x(I, p, c) = x(I', p, c)$ for every color $c \in [\chi] \setminus \{\delta(e_i)\}$,
		\item $x(I, q, c) = x(I', q, c)$ for every page $q \in [\ell] \setminus \{p\}$ and color $c \in [\chi]$,
		\item $v(I', p) = \bot$ or $v(I', p) = w$ for some $w \in V$ with $w \preceq v$\label{dp:queue-layouts-pages-colors-xp-general-no-nesting}, 
		\item $v(I, p) = v$, and
		\item $v(I, q) = v(I', q)$ for every page $q \in [\ell] \setminus \{p\}$.\label{dp:queue-layouts-pages-colors-xp-general-end}
	\end{enumerate}
	\smallskip
	As before, these conditions encode the assignment of the edge $e_i$ to the page $p$.
	In particular, Condition~G\ref{dp:queue-layouts-pages-colors-xp-general-no-nesting} ensures that the edge $e_i$ does not nest another edge on page $p$; recall \Cref{lem:queue-layouts-pages-colors-nesting}. 
	
	We now show that the above conditions preserve feasibility of the states.
	\begin{claim}
		\label{claim:queue-layouts-pages-colors-xp-correctness}
		Let $I$ be a state.
		We have $D[I] = 1$ if and only if $I$ is feasible.
	\end{claim}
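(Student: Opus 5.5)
We prove the claim by induction on the index $i$ of the edge $e(I) = e_i$, following the two directions already outlined in the proof sketch of \Cref{thm:queue-layouts-pages-colors-xp}. The invariant is exactly the definition of feasibility: there is an $\ell$-page queue layout of $G_i$ whose color counts per page are $x(I,\cdot,\cdot)$ and whose rightmost incident vertex on each page $p$ is $v(I,p)$ (with $\bot$ for empty pages).

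\textbf{Base case $i = 1$.} Any page assignment of the single edge $e_1 = uv$ is a queue layout of $G_1$. The layout placing $e_1$ on page $p$ has exactly the counts and rightmost vertices prescribed by Conditions B\ref{dp:queue-layouts-pages-colors-xp-base-start}--B\ref{dp:queue-layouts-pages-colors-xp-base-end}. Conversely, any witnessing layout puts $e_1$ on some page $p$ and forces these values, so $D[I]=1$ if and only if $I$ is feasible.

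\textbf{Inductive step, ($\Rightarrow$).} Suppose $D[I]=1$, witnessed by a state $I'$ with $e(I')=e_{i-1}$, $D[I']=1$, and a page $p$. By the induction hypothesis, $I'$ is feasible; let $\langle\prec,\sigma_{i-1}\rangle$ witness it, and extend it by setting $\sigma_i(e_i)=p$. Conditions G\ref{dp:queue-layouts-pages-colors-xp-general-start}--G\ref{dp:queue-layouts-pages-colors-xp-general-end} except G\ref{dp:queue-layouts-pages-colors-xp-general-no-nesting} immediately give the right color counts. They also give the right rightmost vertices, provided we check that $v$ is indeed the rightmost vertex on page $p$. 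This holds because $v(I',p) \preceq v$ by G\ref{dp:queue-layouts-pages-colors-xp-general-no-nesting}.

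The main point is absence of nesting. Take any edge $xy$ on page $p$ in $G_{i-1}$ with $x\prec y$. Then $xy < e_i$ and $y \preceq v(I',p) \preceq v$, so $v \prec y$ is impossible. Applying \Cref{lem:queue-layouts-pages-colors-nesting} with $xy$ as the lexicographically smaller edge, the two edges nest if and only if $v \prec y$, so they do not nest. Edges on other pages are unaffected, hence $I$ is feasible.

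\textbf{Inductive step, ($\Leftarrow$).} Let $\langle\prec,\sigma_i\rangle$ witness the feasibility of $I$, and let $p=\sigma_i(e_i)$. Deleting $e_i$ gives a queue layout of $G_{i-1}$. Define $I'$ by its counts, and by its rightmost vertices on each page, with $\bot$ for empty pages. Then $I'$ is feasible, so $D[I']=1$ by the induction hypothesis.

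Conditions G1--G3, G5 and G6 hold by construction, except for the claim $v(I,p)=v$. For this, note that every edge $xy$ ($x \prec y$) preceding $e_i$ on page $p$ satisfies $y\preceq v$. Otherwise $v\prec y$, and \Cref{lem:queue-layouts-pages-colors-nesting} would make $xy$ and $e_i$ nest on the same page, a contradiction. The same argument gives G\ref{dp:queue-layouts-pages-colors-xp-general-no-nesting}. Therefore the recurrence sets $D[I]=1$.

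The only delicate step is this use of the lexicographic order to reduce non-nesting to a comparison with the single stored vertex $v(I',p)$. Here we must be careful with edges sharing the left endpoint $u$ or the right endpoint $v$. Such edges never nest, which matches the non-strict $\preceq$ in G\ref{dp:queue-layouts-pages-colors-xp-general-no-nesting}.
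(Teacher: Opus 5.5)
Your proof is correct and takes essentially the same route as the paper. Both argue by induction on the index of $e(I)$. In one direction you delete $e_i$ from a witness to obtain a feasible state $I'$; in the other you extend a witness of $I'$ by placing $e_i$ on page $p$. In both, \Cref{lem:queue-layouts-pages-colors-nesting} reduces non-nesting to comparing against the single stored vertex $v(I',p)$. You are in fact slightly more explicit than the paper in checking that $v$ really is the rightmost vertex on page $p$ (Condition G5), which the paper leaves implicit.
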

	\begin{claimproof}
		We show the claim using induction on (the order of) $e(I)$.
		
		As our base case, we consider states $I$ with $e(I) = e_1$.
		Recall for the following arguments Conditions~B\ref{dp:queue-layouts-pages-colors-xp-base-start}--B\ref{dp:queue-layouts-pages-colors-xp-base-end}.
		Assume that $I$ is feasible.
		Thus, by definition, there exists an $\ell$-page queue layout $\langle \prec, \sigma_1\rangle$ of $G_1$ such that there are precisely $x(I, p, c)$ edges of color $c$ assigned to page $p$, for every $c \in [\chi]$ and $p \in [\ell]$, and $v(I, p)$ is the rightmost vertex which is incident to an edge on page $p$, for every $p \in [\ell]$.
		Since $G_1$ contains precisely the edge $e_1$, the page $p = \sigma_1(e_1)$ witnesses that all conditions are satisfied.
		Thus, we have $D[I] = 1$.
		Conversely, assume $D[I] = 1$.
		Let $p \in [\ell]$ be the page that satisfies Conditions~B\ref{dp:queue-layouts-pages-colors-xp-base-start}--B\ref{dp:queue-layouts-pages-colors-xp-base-end} and consider the linear layout $\langle \prec, \sigma_1 \rangle$ with $\sigma_1(e_1) = p$.
		Since $G_1$ contains precisely the edge $e_1$, this layout witnesses the feasibility of $I$.
		
		For the inductive step, we assume that the statement holds for all states $I$ with $e(I) = e_{j}$ for $j < i$.
		Recall for the following arguments Conditions~G\ref{dp:queue-layouts-pages-colors-xp-general-start}--G\ref{dp:queue-layouts-pages-colors-xp-general-end}.
		Assume that $I$ is feasible and consider the $\ell$-page queue layout $\langle \prec, \sigma_i\rangle$ of $G_i$ witnessing this.
		Observe that the layout $\langle \prec, \sigma_{i - 1}\rangle$ with $\sigma_{i - 1}(e_j) = \sigma_i(e_j)$ for all $e_j \in E$ with $j < i$ is an $\ell$-page queue layout of $G_i - \{e_i\} = G_{i - 1}$.
		In particular, the layout witnesses the feasibility of the corresponding state $I'$.
		Observe that we consider $I'$ when evaluating $D[I]$ as $e(I') = e_{i - 1}$.
		Moreover, in terms of the values $x(\cdot, \cdot, \cdot)$, the two states differ precisely in $x(\cdot, \sigma_i(e_i), \delta(e_i))$, and we have $x(I, \sigma_i(e_i), \delta(e_i)) = x(I', \sigma_i(e_i), \delta(e_i)) + 1$.
		Regarding the $v(\cdot, \cdot)$-values, the two states differ only on the page $\sigma_i(e_i)$.
		Since $e_j < e_i$ for every edge $e_j$ of $G_{i - 1}$, we have that $v(I', \sigma_i(e_i)) \preceq v$ for $e_i = uv$; recall \Cref{lem:queue-layouts-pages-colors-nesting}.
		Consequently, we satisfy Conditions~G\ref{dp:queue-layouts-pages-colors-xp-general-start}--G\ref{dp:queue-layouts-pages-colors-xp-general-end} for state $I'$ by choosing $p = \sigma_i(e_i)$ and thus have $D[I] = 1$.
		Finally, assume $D[I] = 1$ and consider the state $I'$ and the page $p \in [\ell]$ that we used to satisfy Conditions~G\ref{dp:queue-layouts-pages-colors-xp-general-start}--G\ref{dp:queue-layouts-pages-colors-xp-general-end}.
		Since $e(I') = e_{i - 1}$ and $D[I'] = 1$, $I'$ is feasible by induction hypothesis.
		Consider the $\ell$-page queue layout $\langle \prec, \sigma_{i - 1}\rangle$ of $G_{i - 1}$ witnessing feasibility of $I'$ and extend it by assigning the edge $e_i$ to page $p$.
		We observe that the resulting layout $\langle \prec, \sigma_{i}\rangle$ is an $\ell$-page queue layout of $G_{i - 1} + \{e_i\} = G_i$.
		To see why this is the case, assume that two edges nest on some page and observe that $e_i$ must be one of them.
		Let the other edge be $e_j = xy$ with $x \prec y$.
		Since we consider the edges in lexicographic order, we have $e_j < e_i$.
		Thus, we have $x \prec u \prec v \prec y$ ($u \prec x \prec y \prec v$ would imply $e_i < e_j$).
		Since $D[I] = 1$, $I$ satisfies Condition~G\ref{dp:queue-layouts-pages-colors-xp-general-no-nesting}.
		This implies $v(I', \sigma_i(e_i)) = \bot$ or $v(I', \sigma_i(e_i)) = w$ for some $w \in V$ with $w \preceq v$.
		However, this contradicts the above-established order $x \prec u \prec v \prec y$ as the endpoint of the edge $e_j$ on page $\sigma_i(e_i)$ is to the right of $v$.
		Thus, $e_j$ and $e_i$ cannot nest and $\langle \prec, \sigma_{i}\rangle$ is an $\ell$-page queue layout of $G_i$.
		The remaining conditions ensure consistency between the two states.
		Hence, $\langle \prec, \sigma_{i}\rangle$ witnesses feasibility of $I$.
	\end{claimproof}
	
	Once the DP-table $D$ has been computed, we evaluate every state $I$ with $e(I) = e_m$ and $D[I] = 1$ and check whether it adheres to our budget constraints, i.e., whether we have $x(I, p, c) \leq b_p(c)$ for every page $p \in [\ell]$ and color $c\in [\chi]$.
	If we can find such a state, then we can conclude, by \Cref{claim:queue-layouts-pages-colors-xp-correctness}, that \instance is a positive instance.
	On the other hand, if \instance is a positive instance, the corresponding solution witnesses feasibility for some state $I$ with $e(I) = e_m$, which implies $D[I] = 1$ by \Cref{claim:queue-layouts-pages-colors-xp-correctness}.
	Call this last step the \emph{postprocessing} step.
	Combining all, we conclude that our DP-algorithm is correct and we now argue its running time.
	
	\proofsubparagraph*{Running Time.}
	We first bound the size of the DP-table $D$.
	The graph $G$ contains $m$ edges and the value $x(I, p, c)$ ranges between $0$ and $m$ for every page $p \in [\ell]$ and color $c \in [\chi]$.
	Moreover, there are $n + 1$ possible values for $v(I, p)$ for every page $p \in [\ell]$; recall the dummy value $\bot$.
	Hence, there are $\BigO{m \cdot {(m + 1)}^{\ell \cdot \chi} \cdot (n + 1)^{\ell}}$ different states $I$, which also bounds the size of $D$.
	Evaluating a single state $I$ takes $\BigO{\ell\cdot n}$ time, since we just need to examine all potential page assignments for the edge $e(I)$ and evaluate the resulting state $I'$, for which only the rightmost vertex is still unspecified.
	Thus, filling the entire table $D$ takes $\BigO{\ell\cdot n \cdot m \cdot {(m + 1)}^{\ell \cdot \chi} \cdot (n + 1)^{\ell}} \in \BigO{\ell\cdot {(m + 1)}^{(\ell \cdot \chi) + 1} \cdot (n + 1)^{\ell + 1}}$ time.
	Since this dominates the time for the pre- and postprocessing steps, this also bounds the overall running time of our algorithm.
	Finally, we can obtain an admissible $\ell$-page queue layout by standard backtracking techniques (if there exists one).
\end{prooflater}

\subsection{Towards Completing the Complexity Landscape}
\label{sec:queue-layouts-small-conflict-graph}
In this section, we consider the remaining parameter combinations depicted in \Cref{fig:overview}.
First, we parameterize by the number of pages $\ell$, the number of colors $\chi$, and the maximum size \conCompSize of a connected component in the conflict graph \queueConflict.
There, we can apply the same proof strategy as for \Cref{thm:stack-layouts-pages-colors-components-size} (when considering nestings instead of crossings).
More concretely, we partition the connected components of $\queueConflict$ into $g(\chi, \conCompSize)$ different types such that components of the same type are equal up to a (color-preserving) isomorphism.
Then, we formulate an ILP with $h(\ell, \chi, \conCompSize)$ variables that constrains the number of components of each type that can use the same page assignment.
Together with \Cref{prop:ilp-fpt}, this implies:
\begin{theorem}
    \label{thm:queue-layouts-pages-colors-components-size}
    \capacitatedQueueLong is \FPT\ in the number of pages $\ell$, number of colors $\chi$, and the maximum size $\conCompSize$ of a connected component of $\queueConflict[G, \prec]$.
\end{theorem}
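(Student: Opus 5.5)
We transfer the proof of \Cref{thm:stack-layouts-pages-colors-components-size} almost verbatim, replacing crossings by nestings throughout. Given $\instance = \instanceLongQueue$, we first compute $\queueConflict$ in $\BigO{m^2}$ time by testing every pair of edges for nesting under $\prec$, and determine its connected components. We then define component types exactly as in \Cref{def:stack-layouts-colors-components-size-types}: two components of $\queueConflict$ are equivalent if a color-preserving graph isomorphism maps one onto the other. The counting argument of \Cref{lem:stack-layouts-colors-components-size-types} uses only that components have at most $\conCompSize$ vertices, each with one of $\chi$ colors. Hence it carries over unchanged, giving $\BigO{\conCompSize \cdot 2^{\conCompSize^2}\cdot \chi^{\conCompSize}}$ classes, computable within the same time bound.

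Next, for each class $[C]$ we define the valid page assignments $\Sigma([C])$ as the maps $V(C)\to[\ell]$ that put no two nesting edges on the same page. Since two edges of $G$ nest exactly when they are adjacent in $\queueConflict$, these are precisely the proper $\ell$-colorings of the abstract graph $C$.

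The one point that needs care is that validity must not depend on the chosen representative of $[C]$. Unlike \Cref{def:stack-layouts-colors-components-size-embeddings}, which phrases validity via spine positions, we phrase it purely graph-theoretically as proper colorings of $C$. A color-preserving isomorphism preserves adjacency, i.e., nesting, so it transports valid assignments between members of a class. This is what makes the construction well defined. Analogously to \Cref{obs:stack-layouts-colors-components-size-embeddings-number-time}, we get $\Size{\Sigma([C])}\le \ell^{\conCompSize}$, computable in $\BigO{\ell^{\conCompSize}\conCompSize^2}$ time.

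We then set up the same ILP $\mathcal{P}(\instance)$: one variable $x_{[C],\sigma}$ for each class and each $\sigma\in\Sigma([C])$, with Constraints~\eqref{eq:stack-layouts-colors-components-size-bound}, \eqref{eq:stack-layouts-colors-components-size-sum} and~\eqref{eq:stack-layouts-colors-components-size-budget}, where $\#_p([C],\sigma,c)$ is now computed for queue-valid assignments. The number of variables is bounded by $\BigO{\conCompSize \cdot 2^{\conCompSize^2}\chi^{\conCompSize}\ell^{\conCompSize}}$, so \Cref{prop:ilp-fpt} decides feasibility in \FPT\ time.

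Correctness follows the claim in the proof of \Cref{thm:stack-layouts-pages-colors-components-size}: $\instance$ is positive if and only if every class has $\Sigma([C])\neq\emptyset$ and $\mathcal{P}(\instance)$ is feasible.
For the forward direction, we count how many components of each class use each valid assignment in a given solution; these counts satisfy all three constraint families.
For the backward direction, we distribute the $n_{[C]}$ components of each class among the assignments $\sigma$ according to the values $x_{[C],\sigma}$, transporting each $\sigma$ via the isomorphism. Each component is then nesting-free within itself. Edges from different components never nest, since nesting edges are adjacent in $\queueConflict$. Finally, Constraint~\eqref{eq:stack-layouts-colors-components-size-budget} yields admissibility. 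I expect no real obstacle beyond the representative-independence point above.
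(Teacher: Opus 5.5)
Your proposal is correct and follows essentially the same route as the paper, which likewise partitions the components of $\queueConflict$ into color-preserving isomorphism types and reuses the ILP of \Cref{thm:stack-layouts-pages-colors-components-size} with nestings in place of crossings, concluding via \Cref{prop:ilp-fpt}. Your explicit remark is a useful clarification rather than a different approach. You note that valid assignments are exactly the proper $\ell$-colorings of the abstract component, and so are transported by color-preserving isomorphisms; the paper uses this only implicitly when it sets $\sigma(e) = \sigma_1(\xi_i(e))$.
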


Finally, we parameterize by the number \conComp of connected components in the conflict graph~\queueConflict and their maximum size $\conCompSize$.
Consider an instance \instance of \capacitatedQueue.
Since $\queueConflict$ contains one vertex for each edge of $G$ and we can assume without loss of generality that there are no isolated vertices in $G$, we have $n \leq 2m \leq 2\cdot \conComp \cdot \conCompSize$.
With this we can show that we can restrict ourselves to a bounded number of pages and colors, i.e., obtain a kernel, which allows us to brute-force $\instance$:
\begin{restatable}\restateref{thm:queue-small-conflict-graph}{theorem}{theoremQueueSmallConflict}
    \label{thm:queue-small-conflict-graph}
    \capacitatedQueueLong can be solved constructively in $2^{\BigO{(\conComp \cdot \conCompSize)^2\cdot\log(\conComp \cdot \conCompSize)}} + \BigO{\ell \cdot \chi}$ time (which is \FPT\ in the number \conComp of connected components of $\queueConflict[G, \prec]$ and their maximum size \conCompSize).
\end{restatable}
\begin{prooflater}{ptheoremQueueSmallConflict}
    Let $\instance = \instanceLongQueue$ be an instance of \capacitatedQueue.
    Consider the conflict graph $\queueConflict$.
    It has at most $\conComp \cdot \conCompSize$ vertices and since it contains precisely one vertex for every edge of $G$, we get $m \leq \conComp \cdot \conCompSize$.
    Moreover, we can assume without loss of generality that there are no isolated vertices in $G$, as they can be removed without affecting the existence of a solution.
    Thus, we have $n \leq 2m \leq 2 \cdot \conComp \cdot \conCompSize$.

    Since $m \leq \conComp \cdot \conCompSize$, we can also assume that $\chi \leq  \conComp \cdot \conCompSize$ holds.
    Otherwise, we can remove unused colors and renumber them from $1$ to $\conComp \cdot \conCompSize$ to obtain an equivalent instance where the above inequality holds.
    By similar arguments, we can also assume that $b_p(c) \leq m \leq \conComp \cdot \conCompSize$ holds for every page $p \in [\ell]$ and color $c \in [\chi]$.
    Consequently, there are at most ${(m + 1)}^{\chi}$ different budget functions $b_p$.
    If \instance has more than $m$ pages with the same budget vector $b_p$, then we can remove one page from the instance without affecting the existence of a solution.
    This is because, by the pigeonhole principle, in every solution one of these pages will remain empty so we can convert any solution into one where the ``removed pages'' are empty.
    Thus, it suffices to consider at most $m\cdot {(m + 1)}^{\chi}$ different pages $p$ and we can find them after a $\BigO{\ell\cdot \chi}$-time preprocessing step.

    This allows us to brute-force all possible page assignments, which are $2^{\BigO{(\conComp \cdot \conCompSize)^2\cdot\log(\conComp \cdot \conCompSize)}}$-many.
    For each page assignment $\sigma$, we can check in polynomial time (in $m \leq \conComp \cdot \conCompSize$) whether they correspond to a solution \linL.
    The statement now follows.
\end{prooflater}

\section{Generalizations and Concluding Remarks}
In this paper, we have introduced the concept of a capacitated linear layout and studied the complexity of computing an $\ell$-page capacitated stack/queue layout with given spine order~$\prec$.
Although we only considered unweighted graphs in this paper, our algorithmic results can be extended to edge-weighted graphs (when the largest edge-weight is a parameter).

One prominent open question is settling the complexity of \capacitatedQueue when parameterized by the number of colors and the size of the largest connected component in the conflict graph \queueConflict; recall \Cref{fig:overview}.
Furthermore, recent work studied the parameterized complexity of computing an $\ell$-page stack and queue layout with respect to several graph parameters (even when $\prec$ is not part of the input); see for example~\cite{DFGS.LLR.2025}.
Extending these results to the capacitated setting is an interesting direction for future work.
A further natural direction is other variants of capacitated linear layouts, such as capacitated \emph{mixed} linear layouts.

\bibliography{references}

\onlyShort{
\newpage
\appendix

\crefalias{section}{appendix}
\crefalias{subsection}{appendix}

\section{Omitted Details from \Cref{sec:stack-layouts}}
\label{app:stack-layouts}

\subsection{Omitted Proofs from \Cref{sec:stack-layouts-components}}
\label{app:stack-layouts-components}

\theoremStackComponentsFPT*
\label{thm:stack-layouts-components-fpt*}
\ptheoremStackComponentsFPT

\subsection{Omitted Details from \Cref{sec:stack-layouts-components-size}}
\label{app:stack-layouts-components-size}

\stackComponentSizeGadgets

\lemmaStackComponentsSizeClause*
\label{lem:stack-layouts-components-size-clause*}
\plemmaStackComponentsSizeClause

\theoremStackComponentsSize*
\label{thm:stack-layouts-components-size*}
\ptheoremStackComponentsSize

\subsection{Omitted Details from \Cref{sec:stack-layouts-colors}}
\label{app:stack-layouts-colors}

\lemmaStackColorsXP*
\label{lem:stack-layouts-colors-xp*}
\plemmaStackColorsXP

\stackColorsNumberGadget

\lemmaStackColorsWOne*
\label{lem:stack-layouts-colors-w-one*}
\plemmaStackColorsWOne

\subsection{Omitted Proofs from \Cref{sec:stack-layouts-colors-components-size}}
\label{app:stack-layouts-colors-components-size}

\lemmaStackColorsComponentsTypes*
\label{lem:stack-layouts-colors-components-size-types*}
\plemmaStackColorsComponentsTypes

\theoremStackColorsComponents*
\label{thm:stack-layouts-pages-colors-components-size*}
\ptheoremStackColorsComponents

\section{Omitted Details from \Cref{sec:queue-layouts}}
\label{app:queue-layouts}

\subsection{Omitted Details from \Cref{sec:queue-layouts-two-pages}}
\label{app:queue-layouts-two-pages}

\queueLayoutsComponentsSizeHardnessDetails
\theoremQueueComponentsSize*
\label{thm:queue-layouts-components-size*}
\ptheoremQueueComponentsSize

\theoremQueueThreePagesConnected*
\label{thm:queue-layouts-three-pages-connected*}
\ptheoremQueueThreePagesConnected

\subsection{Omitted Proofs from \Cref{sec:queue-layouts-monochromatic}}
\label{app:queue-layouts-monochromatic}

\theoremQueueMonochromatic*
\label{thm:queue-layouts-monochromatic*}
\ptheoremQueueMonochromatic

\theoremQueueColorsConnected*
\label{thm:queue-layouts-two-colors-connected-conflict*}
\ptheoremQueueColorsConnected

\subsection{Omitted Proofs from \Cref{sec:queue-layouts-pages-colors}}
\label{app:queue-layouts-pages-colors}

\lemmaQueuePagesColorsNesting*
\label{lem:queue-layouts-pages-colors-nesting*}
\plemmaQueuePagesColorsNesting

\theoremQueuePagesColors*
\label{thm:queue-layouts-pages-colors-xp*}
\ptheoremQueuePagesColors

\subsection{Omitted Proofs from \Cref{sec:queue-layouts-small-conflict-graph}}
\label{app:queue-layouts-small-conflict-graph}

\theoremQueueSmallConflict*
\label{thm:queue-small-conflict-graph*}
\ptheoremQueueSmallConflict

}

\end{document}